\newcommand{\conf}{\ensuremath{c}}
\newcommand{\per}{\ensuremath{t_{a, b}}} 
\newcommand{\sight}{\ensuremath{s}} 
\newcommand{\word}{\ensuremath{w}}
\newcommand{\Dhz}{\ensuremath{\Delta}_0}
\newcommand{\tot}{\ensuremath{m}}
\newcommand{\Spath}{\ensuremath{\mathbb{P}}}
\newcommand{\N}{\ensuremath{\mathbb{N}}}
\newcommand{\Z}{\ensuremath{\mathbb{Z}}}
\newcommand{\dyn}{\ensuremath{\delta}}
\newcommand{\DD}{\ensuremath{\Delta}}
\newcommand{\expect}{\ensuremath{\mathbb{E}}}
\newcommand{\lMin}{\ensuremath{h_{min}}}
\newcommand{\lMax}{\ensuremath{h_{max}}}
\newcommand{\DDl}{\ensuremath{\Delta_h}}
\newcommand{\lR}{\ensuremath{\ell_{r}}}
\newcommand{\lOpt}{\ensuremath{\ell_{opt}}}
\newtheorem{theorem}{Theorem}
\newtheorem{lemma}{Lemma}
\newtheorem{prop}{Proposition}
\newtheorem{definition}{Definition}
\newtheorem{property}{Property}
\newtheorem{corollary}{Corollary}
\title{Lost in Self-stabilization. \thanks{This work is partially supported by Programs ANR Dynamite, Quasicool and IXXI (Complex System Institute, Lyon).}}
\author[1]{Damien Regnault\thanks{Corresponding author.}}
\author[2]{\'Eric R\'emila}
\affil[1]{IBISC, EA4526, 

Universit{\'e} d'{\'E}vry Val-d'Essonne, 91037 
	{\'E}vry, France.
	
{\texttt{damien.regnault@ibisc.univ-evry.fr}}}
\affil[2]{Universit\'e  de Lyon, GATE LSE (UMR CNRS  5824),  

Site st\'ephanois, 42023 Saint-Etienne, France.

 \texttt{eric.remila@univ-st-etienne.fr}}
\begin{document}

\maketitle

\begin{abstract}

One of the questions addressed here is "How can a twisted thread correct itself?". We consider a theoretical model where the studied mathematical object represents a $2D$ twisted discrete thread linking two points. This thread is made of a chain of agents which are ``lost'', \emph{i.e.} they have no knowledge of the global setting and no sense of direction. Thus, the modifications made by the agents are local and all the decisions use only minimal information about the local neighborhood. We introduce a random process such that the thread reorganizes itself efficiently to become a discrete line between these two points. The second question addressed here is to reorder a word by local flips in order to scatter the letters to avoid long successions of the same letter. These two questions are equivalent. The work presented here is at the crossroad of many different domains such as modeling cooling process in crystallography \cite{FBR2010ap,FBR2010an,FBR2010ao}, stochastic cellular automata \cite{Fates11,FMST06}, organizing a line of robots in distributed algorithms (the robot chain problem \cite{DKLM06,KM09}), and Christoffel words in language theory \cite{Ber07}.
\end{abstract}

\newpage
\section{Introduction}

\subsection{The result}

\input{trace1.tex}

In this paper, we define and analyze a random process whose interest is at the crossroad of many different domains. Among all the different interpretations of our work, we choose a toy model using a simple graphical representation to ease the reading of the article. To avoid lengthy definitions, we start by a simulation of the random process that we study along this paper: see Figure \ref{trace1}. This simulation shows a twisted discrete thread reorganizing itself into a good approximation of the continuous line linking its two endpoints. In fact, this thread is made by a chain of agents. Movements of the agents correspond to local modifications (called flips) of the twisted thread. These movements will slowly but surely transform the twisted thread into a discrete line. The main interest of our result is that our process is heavily constrained. All modifications are decided locally by the agents which are memoryless, have no sense of direction, and no knowledge of the global setting. The decisions have to be made only with the relative position of the closest neighboring agents. We show that, despite all these constraints, it is possible to program the agents to achieve our goal.





Our model is precisely described in section 2, but we can now give an informal presentation. 
We work on a $2D$ discrete grid of size $A\times B$ (see Figure \ref{fig:grid}). We are given  chain of $A+B$ agents, which  forms a path  between the opposite endpoints  of the grid, of coordinates $(0, 0)$ and $(A, B)$. We want to design a rule such that, at each time step, an agent is randomly chosen and is allowed  to jump in an other site of the grid, with preservation of the connectivity of the chain. The goal of the process  is  to reorganize the chain such that it stabilizes in a position  as close as possible to the continuous line, of slope $\frac{B}{A}$, passing by the opposite  endpoints.

We design here a distributed algorithm which achieves it efficiently (in polynomial time according to the distance between the two endpoints) and uses only local information. If each agent has a local sight of $s$ (\emph{i.e.} it can only observe sites which are at distance at most  $s$ from its own position). 


Let $(t_a, t_b)$ be the pair of relatively prime positive integers such that $\frac{t_b}{t_a}= \frac{B}{A}$. 
We prove that our  process succeeds when  $t_a+t_b \leq s $. We also show that this bound on the sight is almost tight: a sight of at least $t_a+t_b-1$ is needed to self-stabilize a discrete line of slope $\frac{B}{A}$. 

We have another difficulty, due to boundary conditions. Our  process succeeds when the starting chain is completely on one  side of the grid limited by the line segment linking endpoints. For certain initial conditions,  we only reach a set of configurations, all very close to the target line,  but we cannot ensure that the process stabilizes on a unique chain. Nevertheless,  this difficulty can be avoided, identifying $(A, B)$  and $(0, 0)$, and, consequently,  working on  cycles instead of chains. 

all these results are formally written in  Section \ref{results}. 
We also conjecture that another way of getting a complete stabilization is to allow a larger sight  to sites close to endpoints.


\subsection{Contexts}

In this paper, all these inspirations are translated into a graphical interpretation in order to uniform the techniques issued from different fields and to ease the reading of the article. Thus, even if these inspirations are hidden in the current form of the article, they are nevertheless present and a reader familiar with these domains will be able to make the link. We now present the originality of our work compared to these different domains.

\subsubsection{Crystallography}
We were lead to consider this problem when studying a model of cooling process of crystals. Crystals are made of several kind of atoms. At high temperature, the structure of the atoms is chaotic but when the temperature decreases the atoms self-stabilize into a crystal, an ordered structure. Crystals are commonly modeled by tilings \cite{Henley} \textcolor{red}
In a set of studies \cite{FBR2010ap, FBR2010an,FBR2010ao}
 we developed a model which transforms an unordered tiling into an ordered one to model the cooling of a crystal. 

Here, the model represents the cooling of a crystal with two kinds of atoms disposed on a line: one kind is represented by an horizontal segment and the other one by a vertical segment. At high temperature, the segments are unordered and thus the corresponding thread is twisted. At low temperature, an atom wants the less possible neighboring atoms to be as the same kind as itself. The corresponding configuration in our model is the discretized line linking the two endpoints. We propose here an explanation on how atoms self-stabilize from a chaotic structure to an ordered one. The fact that the agents of this paper simulate atoms justify all the previous constraints. All of our anterior studies assume that the quantity of each kind of atom is the same. In particular, \cite{FBR2010an} deals with our problem when the line to approximate has a slope $1$ (the number of horizontal segments is equal to the number of vertical segments). The present  paper is a generalization of this previous study. This generalization is not direct since we show that the atoms need to consider a wider neighborhood to self-stabilize. 

\subsubsection{Distributed computing}

The robot chain problem \cite{DKLM06}  in distributed algorithms is really close to our problem except that the topology is continuous. Several solutions where presented to solve this problem but most of them, such as Manhattan-Hopper~\cite{KM09}, rely on unavailable information in our setting (robots have names, know global informations or can fuse together, \ldots) and thus are not applicable. Nevertheless, one algorithm is interesting for our case: the \textsc{Go-To-The-Middle} algorithm. In this algorithm, when a robot decides to act, it moves to the middle of the line linking its two nearest neighbors. In this paper, the movements of our "robot" are limited to jumping from one site to another one and thus for this community, our work can be seen as a discretization of \textsc{Go-To-The-Middle}. 

Note that this discretization of a continuous algorithm is far from trivial. The difficulty  comes from the fact that the discrete topology creates some artifacts. Dealing with these artifacts creates oscillations in the obtained dynamics, which does not appear in the continuous case. 

As a proof of this complexity, we show that if the space is discrete then the robots need to know the position of more than their two nearest neighbors to achieve our goal (see theorem \ref{main:impossible}). 

\subsubsection{Language theory}
Obtaining a good discrete approximation of a continuous line is a classical problem. It is important to note that this problem is heavily linked to language theory. Indeed a discrete line on the plane can be represented as a word where one letter represents an horizontal segment and the other one represents a vertical segment. Words representing a discrete approximation of a continuous line of rational slope are called Christoffel word \cite{Ber07} (Sturm words are a discrete approximation of continuous lines of irrational slope). In this paper, because of the graphical representation of the problem, we will not  use directly use the vocabulary  of  language theory, even if we strongly rely on it to develop our algorithm and to deal with the discrete artifacts. In particular, the simulator used to do Figure \ref{trace1} computes Christoffel sequences to determine the movements of  agents. 



\subsection{Tools}

\subsubsection{Height Functions on Tilings }
Lot of work was done to categorize all the different kinds of tilings \cite{Grunshe} . 
 A main tool in the study of   is the notion of height functions, introduced by W. Thurston \cite{Thu90} and independently in the statistical physics literature (see  \cite{Bur97} for a review). The height function is used to control  the evolution of a tiling of a given shape by a succession of local modifications (called flips). Nevertheless, any algorithm based on a height function is a centralized one, since the height function cannot be computed locally. 
 Here, we present a distributed version of this method. Our algorithm does not uses the height function, but the height function has a crucial role in the analysis.

\subsubsection{Probabilistic Cellular automata }

The   oscillations of the height function are related to the evolution  of a stochastic cellular automaton called \textsc{ECA} $178$ previously studied in \cite{FMST06}.  We have been able to use our expertise in the analysis  of the convergence  of probabilistic cellular automata to transfer it to  the analysis  of the convergence of our algorithm. By this way we can  ensure a polynomial time of coalescence in average. \\

In section \ref{Not}, we define formally our problem. In section \ref{def:rule}, we 
present our process and its  main properties. We  
show that the obtained dynamics quickly converges to a solution which is close to the objective line , and give our results in section \ref{analyse}.  
 Finally in section \ref{conc}, we present some  questions left open by this paper.

%
%
%

\section{Notations and definitions}
\label{Not}

\subsection{The model}

\subsubsection{Configurations and associated words}

Let $(A, B)$ be a pair of positive integers. We state ${\gcd(A,B)} = n$, $t_a = \frac{A}{n}$, $t_b = \frac{B}{n}$, $\per = t_a+t_b$ and  $\tot = A+B =  n\,  \per$.    
The ratio $ \frac{B}{A} =\frac{t_b}{t_a}$ represents the slope of the continuous line of equation: $-t_b x + t_a y = 0$,    passing by $(0, 0)$ and $(A, B)$, that  we wish to approximate. This line is called the \emph{ideal line.} 
In this paper, elements of $\N^2$ are called \emph{sites}.

We state: $a = (1, 0)$,  $b= (0, 1)$  and $\Sigma=\{a,b\}$. A \emph{configuration} $c$ is a sequence $(c_0, ...., c_\tot)$ of sites such that  $c_0 = (0, 0)$, $c_\tot = (A, B)$ and  for each $i <m,$ either 
$c_{i+1} - c_i = a$ or $c_{i+1} - c_i = b$. The set of configurations is denoted by $\Spath$. 
The word $\word$ associated to the configuration $\conf$ 
is  the word $\word = w_1 w_2 ... w_\tot$ of $\Sigma^{\tot}$,  such that; for each $i$ of $\{1, 2,...\tot\}$, $\word_i$  is the value of  $\conf_{i}  - \conf_{i-1}$. 

For each word $\word$ of $\Sigma^{*}$, let   $|\word|_a$ (respectively $|\word|_b$) denote  the number of letters $a$  (respectively $b$) in $\word$, and $|\word| = |\word|_a + |\word|_b$. 
If $\word$ is associated to a configuration $\conf$,  then we  have $|\word|_a=A$; $|\word|_b=B$. Conversely, 
for each word $\word$ of $\Sigma^{\tot}$,  with $ |\word|_a = A$  and $ |\word|_b =A$,  can be associated to  a unique configuration $\conf$ of $\Spath$, \emph{i.e.} $\Spath$ and $\{w \in \Sigma^{\tot}: |\word|_a=A \text{ and } |\word|_b=B \}$ are in bijection.

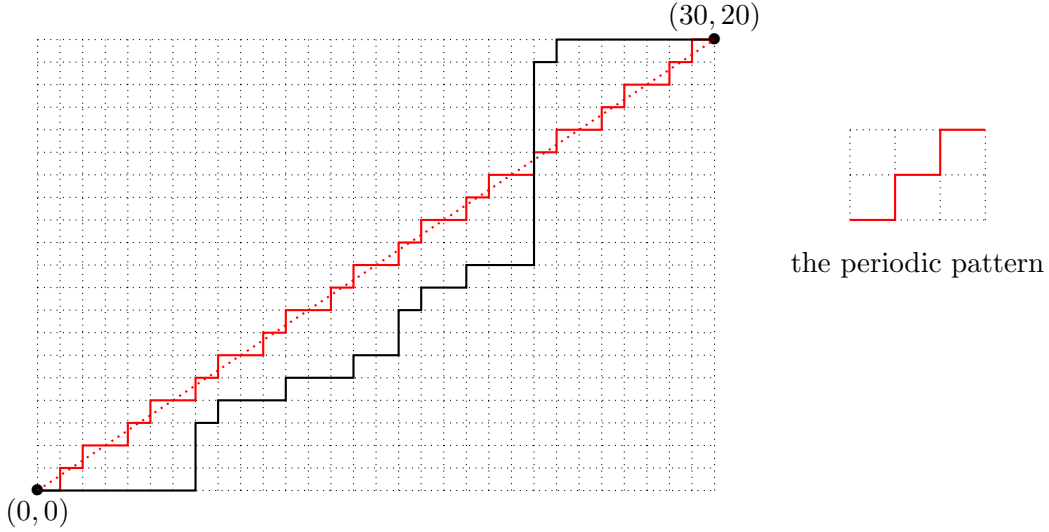
\begin{figure}[htb]
\begin{center}
\begin{tikzpicture}[x=0.3cm,y=0.3cm]

\path [dotted, draw, thin] (0,0) grid[step=0.3cm] (30,20);
\foreach \Point in {(0,0),(30,20)}{
    \node at \Point {\textbullet};
}
\node [black] at (0,-1) {$(0,0)$};
\node [black] at (30,21) {$(30,20)$};

\path [draw, color=red, thick] (0,0) -| (1,1) -| (2,2) -| (4,3) -| (5,4) -| (7,5) -| (8,6) -| (10,7)-| (11,8) -| (13,9)-| (14,10)-| (16,11)-| (17,12)-| (19,13)-| (20,14) -| (22,15)-| (23,16)-| (25,17)-| (26,18)-| (28,19) -| (29,20) -- (30,20);

\path [draw, red, dotted, thick] (0,0) -- (30,20);

\path [draw, thick] (0,0) -| (7,3) -| (8,4) -| (11,5) -| (14,6) -| (16,8) -| (17,9) -| (19,10)-| (22,19) -| (23,20)-- (30,20);


\path [dotted, draw, thin] (36,12) grid[step=0.6cm] (42,16);

\path [draw, color=red, thick] (36,12) -| (38,14) -| (40,16) -- (42,16);
\node [black] at (39,10) {the periodic pattern};

\end{tikzpicture}
\caption{The grid is drawn in dashed black, the continuous line linking the two endpoints is in dotted red, a good discrete approximation of this line is in red and a chain is in black. The parameters are $t_a=3$, $t_b=2$ and $n=10$. Thus $\per=5$, $\tot=50$ and the slope of the continuous line is $\frac{2}{3}$. This line is approximated by repeating a pattern of length $\per=5$ represented in the right part of the figure.}
\label{fig:grid}
\end{center}
\end{figure}

\subsubsection{Height and thickness }
For each site  $c =  (x, y)$, we define the \emph{height} $h(c)= -t_b x + t_a y$.  In the  following, we will extensively use the following properties: 

\begin{property}
Let $c$ and  $c'$ denote two sites. 
\begin{itemize} 
\item  $h(c + a) = h(c)  -t_b$ and $h(c + b) = h(c)  + t_a$,
\item we have $h(c) = h(c')$ if and only if there exists an integer  $k$ such that $c'-c = k(t_a, t_b)$,
\item we have   $h(c) \equiv h(c')\mod (\per)$ if and only if there exists two  integers  $k$  and $k'$ such that $c'-c = k(t_a, t_b) + k' (-a+b)$.

 In particular, for each $i$ of $ \{0,\ldots, \tot\}$, the value $h(c_i)\mod (\per)$ does not depend on the configuration $c$ but only on $i \mod (\per)$.
\end{itemize}

\end{property}
  
For a configuration $c$, we define $\lMin(c)   = \min\{h(c_i), 0 \leq i \leq \tot \}$ and $\lMax(c) = \max\{h(c_i), 0 \leq i \leq \tot \}$.  Thus the configuration $c$ is included  the closed strip limited by lines (of slope $\frac{t_b}{t_a}$) of equations $-t_b x + t_a y = \lMin(c)$  and $-t_b x + t_a y = \lMax(c)$. Moreover this strip is the smallest one among all strips limited by lines of slope $\frac{t_b}{t_a}$. 
The \emph{thickness} $\DDl(c)$ of the configuration $\conf$ is defined by  $ \DDl(c) =   \lMax(c) - \lMin(c)$. The properties of values  $\mod (t_a+t_b)$ imply that $ \DDl(c) \geq t_a+t_b-1$.

Among all configurations of $\Spath$, we will focus on the ones which are  good approximations of the finite continuous line linking $(0,0)$ to $(A, B)$, \emph{i.e.} configurations $c$ such that $\DDl(c)$ is minimal. The words associated to these configurations are called \emph{Christoffel words} and are extensively studied \cite{Ber07}. We use the following definition of Christoffel words, which is the most practical in our context: the word $w$ associated to a configuration $c$ is a Christoffel word of slope $\frac{t_b}{t_a}$ if and only if $\DDl(c) = t_a+t_b-1$.  We will also say that, in such a case, $c$ is a \textit{Christoffel configuration}. 

\subsection{Local transition rules }
Configurations are static objects, now we introduce a way to modify them locally. Consider a configuration $\conf$ and $2\leq i \leq \tot$, the configuration $\conf'$ obtained by \emph{flipping} letters $i-1$ and $i$ in $\conf$ is defined as follows: consider the words $\word, \word'$ of $\Sigma^\tot$ where $\word$ is the word associated to configuration~$\conf$ and $\word'$ is obtained by flipping letters $w_{i-1}$ and $w_i$ in $w$, \emph{i.e.} $\word'_{i-1}=\word_{i}$, $\word'_{i}=\word_{i-1}$ and for all $j \in \{1,\ldots, \tot\} \setminus \{i-1,i\}$, $\word'_{j}=\word_{j}$; then $\conf'$ is the configuration associated to $\word'$. Note that for all $0\leq j \leq \tot$ such that $j\neq i$ we have $\conf_j=\conf'_j$ and:
\begin{itemize}
\item if $\word_{i-1} = a$ and $\word_{i}=b$ then $\conf'_j=\conf_j-a+b$ and the flip is called \emph{increasing},
\item if $\word_{i-1} = b$ and $\word_{i}=a$ then $\conf'_j=\conf_j+a-b$ and the flip is called \emph{decreasing}.
\end{itemize}
We will also denote this operation as ``flipping in site $c_{i}$'' for a configuration $c$. Consider a configuration $c$, doing an increasing (resp. decreasing) flip in  $c_i$ increases (resp. decreases) the  height  of this site from $\per$ units. \\


We fix   a positive integer $\sight$. 
Let $\conf$ be a configuration, $\word$ its associated word, and $i$ such that $1 \leq i \leq \tot -1$.  The \emph{right word }  $w^r(c, i)$ in $i$ for $\conf$ is the word $w(c, i)$ of $\Sigma^{\leq s}$ defined by $w^r(c, i) =  w_{i+1}\, w_{i+2} \, ... w_{i+\sight}$ when $i+\sight \leq \tot$, and  $w^r(c, i) = w_{i+1}\, w_{i+2} \, ... w_\tot $ when $i+\sight > \tot$.
 In a similar way, the \emph{left word }  $w^l(c, i)$ in $i$ for $\conf$ is the word $w^l(c, i)$ of $\Sigma^{\leq s}$ defined by $w^l(c, i) = w_{i}\,w_{i-1} ... w_{i-\sight+1}$ when $i \geq \sight$, and   $w^l(c, i) = w_{i}\,w_{i-1} ... w_{1}$ when $i < \sight$. We choose to write  $ w^l(c, i) $ reversing indices, since we adopt the point of view of a processor located  in $c_i$, which reads words  starting from its own position. 
 
Let $\Sigma^{\leq s}$  denote the set of non-empty words  on $\Sigma$ of length at most $\sight$, \emph{i.e.}  $\Sigma^{\leq s} =  \bigcup_{s' = 1}^s\Sigma{s'}$. A \emph{local transition rule} of sight $\sight$ is given by a function $\delta : (\Sigma^{\leq s} )^2 \rightarrow \{0, 1\}$.  
 Given a configuration $c$, we say that the site $c_i$ is \emph{active}  in $\conf$ when $\delta( w^l(c, i), w^r(c, i)) =1 $. Otherwise, the site $c_i$ is \emph{inactive}  in $\conf$. 

Notice that from our formalism, the activity status of a  site $c_i$ does not depend on 
\begin{itemize}

\item  any global parameters: $t_a,t_b, n, \tot$: the rule is \emph{local},

\item  the integer $i$:  the rule is \emph{anonymous}, 

\item  the   position  of $c_i$ on the grid: the site is \emph{partially lost}.
\end{itemize}

We say ``partially lost'' since there exist rules which allow $c_i$ to use some local   elements of orientation:  the site does not know  its own position, but, nevertheless, it  can possibly make difference between the top and the bottom, and between  clockwise and counterclockwise senses, and use these informations to choose its activity status.  

 We want to use a rule where sites are ``completely lost'', in the sense that the rule do not use the informations above. Formally,  a rule  $\delta$ is  \emph{totally  symmetric}  when 

\begin{itemize}

\item for each pair $(w, w')$ of  $(\Sigma^{\leq s})^2$, we have $\delta( w, w')= \delta(w', w)$,

\item for each pair $(w, w')$ of  $(\Sigma^{\leq s})^2$, we have $\delta( w, w')= \delta( g(w), g(w')$, where $g$ is the word morphism on   $\Sigma^*$ such that $g(a) = b$ and $g(b) = a$
\end{itemize}

The first item claims the invariance of $\delta$ by the central symmetry, the second one claims the invariance of $\delta$ by the symmetry according to the main diagonal line.

%

By abuse of notation, for a configuration $\conf$ (which my be a random configuration),  we design by $\dyn(c)$ the random configuration obtained as follows: a number $i$ of $\{1, 2, \tot -1\}$ is selected uniformly at random, and if the corresponding site  $c_i$ is active,  then it is flipped,   otherwise nothing is done:  $\dyn(c) = c$.

The transition rule $\dyn$ introduces a discrete Markovian process  on configurations:  let $\conf^t$ design the configuration at time $t$, $\conf^0$ is the \emph{initial configuration}. The configuration at time $t+1$ is a random variable defined by~$\conf^{t+1}=\dyn(\conf^t)$. 

\section{The specific  transition rule}
\label{def:rule}
\subsection{Construction}
Our aim is to specify the rule $\delta$ in order to construct   a \emph{coalescence process},  \emph{i. e.} given by a totally symmetric local transition rule such that: 
\begin{itemize}
\item any initial configuration will reach a  Christoffel configuration of  slope $\frac{t_b}{t_a}$ in polynomial expected time,
\item all   Christoffel  configurations of slope $\frac{t_b}{t_a}$ are \emph{stable}, \emph{i. e.} have   no active site.
\end{itemize}

We will now describe our transition rule $\delta$. First, $\delta = \max\{ \delta^r, \delta^l \}$, where $ \delta^r$  and $ \delta^l$ are rules such that for each pair $(w, w')$ of words, we have $ \delta^r(w, w')  =  \delta^l(w', w)$. So it suffices to define $\delta_r$ to completely define $\delta$, and we are ensured that the first condition for $\delta$ to be totally symmetric really holds.

Notice that, when $w^r(c, i)_1= w^l(c, i)_1$, the fact of flipping in  $i$ is completely irrelevant.  Thus, we can  state:  $\delta^r (w,  w') = 0$ when $w_1 = w'_1$. 
In order to ensure     $\delta$ to be totally symmetric, we construct $\delta^r$ such that   $\delta^r (w,  w') = \delta^r (g(w),  g(w')) $. Thus,  it can be assumed without loss of generality that $w'_1 = b$.  
We have several  constraints to have $\delta^r (w,  w') = 1$.  If those there constraints are simultaneously satisfied then, $\delta^r (w,  w') = 1$. Otherwise $\delta^r (w,  w') = 0$. These constraints are stated and explained below. \\

\textbf{Sight constraint: }the first one is that:  
\begin{align} 
 |\word'|  = \sight.
\end{align} 
The interpretation is clear, $c_i$ must have a visibility at least $\sight$ on its right side. \\

\textbf{Weak thickness constraint:} this second constraint is the heart of the process. It ensures that the thickness of the configuration is not increasing wherever the flip is done. This is not trivial without the  knowledge of  $(t_a, t_b)$. 

For $ 1 \leq i \leq s$, we define $a'_i $ (respectively $b'_i$) as the number of $a$ (respectively $b$) in the prefix of length $i$ of $w'$, \emph{i.e.}  the word $ w'_1w'_2 ... w'_i$;  and for  $ 1 \leq j  \leq  \vert \word  \vert$, we define $a_j $ (respectively $b_j$) as the number of $a$ (respectively $b$) in the prefix of length $j$ of $w$. \emph{i.e.}  the word $ w_1w_2 ... w_j$. We define $(r_a, r_b)$ as $(a'_i, b'_i)$, with $ \frac{b'_i}{a'_i}$ minimum (with the convention $\frac{b}{0} =+ \infty)$. 
in case of tie,  we take the pair with the lowest index $i$ (according to the previous  convention, for   $w' = bb....b$,  we take $(r_a, r_b) = (1, 0)$). 

The second constraint to possibly have  $\delta^r (w,  w') = 1$ is:  
\begin{align} 
 \exists \, j \in \{ 1, 2, ..., \sight   \}  \, \vert  \, r_b a_j - r_a b_j \geq r_a+ r_b. 
 \label{thick}
\end{align} 

Notice that $-r_b x + r_a y \geq r_a+ r_b +h(c_i)$ is an equation of the half-plane limited by the line $\lR'$ of slope  $ \frac{r_b}{r_a}$  passing by $c_i-a+b$, and not containing $c_i$. The condition claims that there exists a site $c_{i-j}$, with $ 1 \leq j \leq s$, which is element of this half-plane, \textit{i. e.}  is over the limit line. 
\begin{definition}
Let $\sight$ be a positive integer.  
A pair  $(u, v) $ of $\N^2$ is \emph{visible} by $ \sight$ if $u+v \leq \sight$. 
\end{definition}

\begin{lemma}
\label{lem_haut}
Assume that $(t_a, t_b)$ is visible by $ \sight$. 
Let $c$ be a configuration and $i \in \{1, 2, ..., \tot -1\}$ such that, if we state  $(w^l(c,i), (w^r(c,i)) = (w, w') $, then  $(w, w')$ satisfies the two constraints above, and let $j$ be an integer allowing to satisfy the thickness constraint.  
\begin{itemize}
\item If $ \frac{r_b}{r_a} > \frac{t_b}{t_a}$,  then  $h(c_i) + t_a +t_b  \leq  h(c_{i+t_a +t_b})$,
\item If $ \frac{r_b}{r_a} < \frac{t_b}{t_a}$, then   $h(c_i) + t_a +t_b < h(c_{i-j})$,
\item If $ \frac{r_b}{r_a} =  \frac{t_b}{t_a}$, then $h(c_i) + t_a +t_b  \leq  h(c_{i-j})$. Moreover, in this case we have the equivalence: 
$$r_b a_j - r_a b_j  = r_a+ r_b \iff h(c_i) + t_a +t_b  = h(c_{i-j}). $$  
  \end{itemize}
\end{lemma}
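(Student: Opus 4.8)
The plan is to work throughout with two linear functionals on the plane: the true height $h(x,y)=-t_b x + t_a y$ and the auxiliary $r$-height $h_r(x,y)=-r_b x + r_a y$, both read off on displacement vectors by the same formula (so $h(c+v)=h(c)+h(v)$, and likewise for $h_r$). The point of this is that the two constraints are exactly half-plane membership statements. Since an increasing flip at $c_i$ sends it to $P:=c_i-a+b$, and $h(-a+b)=\per$ while $h_r(-a+b)=r_a+r_b$, we have $h(P)=h(c_i)+\per$ and $h_r(P)=h_r(c_i)+r_a+r_b$. First I would record the two facts that the hypotheses provide. Writing $c_{i+\ell'}=c_i+(a'_{\ell'},b'_{\ell'})$, minimality of the slope $\frac{r_b}{r_a}$ over all right prefixes says precisely that $h_r(c_{i+\ell'})\geq h_r(c_i)$ for every $\ell'\leq|w'|$; and the thickness constraint \eqref{thick} at index $j$ rewrites verbatim as $h_r(c_{i-j})\geq h_r(P)$, i.e. $c_{i-j}$ lies (weakly) above $\ell_r'$. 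I would also note that $w^l(c,i)_1=w_i=a$ (otherwise $w_1=w'_1$ and $\delta^r$ vanishes), so $a_j\geq 1$, and that $r_a\geq 1$ always, because the minimal slope is never attained on an all-$b$ prefix unless $w'$ is itself all $b$'s, in which case \eqref{thick} is unsatisfiable.

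For the case $\frac{r_b}{r_a}>\frac{t_b}{t_a}$, the sight constraint $|w'|=\sight$ together with the standing visibility hypothesis $\per\leq\sight$ guarantees that $c_{i+\per}$ is a visible right site, so $h_r(c_{i+\per})\geq h_r(c_i)$, i.e. $r_a b'_\per\geq r_b a'_\per$. Combining this with $a'_\per+b'_\per=\per$ gives $a'_\per\leq\frac{r_a\,\per}{r_a+r_b}$, and the case hypothesis $r_a t_b<r_b t_a$ makes the right-hand side strictly smaller than $t_a$; by integrality $a'_\per\leq t_a-1$. Since $h(c_{i+\per})-h(c_i)=-t_b a'_\per+t_a b'_\per=\per(t_a-a'_\per)$, this yields $h(c_i)+\per\leq h(c_{i+\per})$, as claimed.

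For the remaining two cases I would study the single displacement $D:=c_{i-j}-P=(1-a_j,\,-1-b_j)$, which satisfies $D_x\leq 0$ and $D_y<0$, and for which the thickness constraint is exactly $h_r(D)\geq 0$. In the case $\frac{r_b}{r_a}<\frac{t_b}{t_a}$, note $h_r(D)\geq 0$ with $r_a\geq 1$ forces $D_x<0$ (if $a_j=1$ then $h_r(D)=r_a D_y<0$, impossible), so $h_r(D)\geq 0$ becomes $\frac{|D_y|}{|D_x|}\leq\frac{r_b}{r_a}<\frac{t_b}{t_a}$, whence $h(D)=t_b|D_x|-t_a|D_y|>0$, i.e. $h(c_i)+\per<h(c_{i-j})$. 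In the case $\frac{r_b}{r_a}=\frac{t_b}{t_a}$, coprimality of $(t_a,t_b)$ gives $(r_a,r_b)=k(t_a,t_b)$ with $k\geq 1$, hence $h_r=k\,h$ identically; the inequality $h_r(D)\geq 0$ then transfers directly to $h(D)\geq 0$, giving $h(c_i)+\per\leq h(c_{i-j})$, and since $h_r(D)=0\iff h(D)=0$ we obtain the stated equivalence $r_b a_j-r_a b_j=r_a+r_b\iff h(c_i)+\per=h(c_{i-j})$.

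I expect the first case to be the main obstacle, since it is the only one that uses the right word and the sight/visibility hypothesis, and it is the only place where a rational bound ($a'_\per<t_a$) must be upgraded to an integer bound ($a'_\per\leq t_a-1$). The other two cases collapse to a one-dimensional slope comparison on $D$ once the constraints have been reinterpreted as half-plane conditions for $h_r$. The only subtlety to dispatch carefully is the guarantee $r_a\geq 1$ (including the degenerate convention $(r_a,r_b)=(1,0)$), which is what makes the slope comparisons and the forcing of $D_x<0$ legitimate.
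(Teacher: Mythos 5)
Your proof is correct and follows essentially the same route as the paper's: the second and third cases are exactly the paper's slope comparisons, repackaged via the auxiliary functional $h_r$ and the displacement $c_{i-j}-(c_i-a+b)$, while your first case replaces the paper's appeal to the congruence of heights modulo $\per$ (Property~1) by the equivalent direct computation $h(c_{i+\per})-h(c_i)=\per(t_a-a'_{\per})$ together with the integrality bound $a'_{\per}\leq t_a-1$. Your explicit justification that $r_a\geq 1$ (and hence $a_j\geq 2$ in the second case) is a welcome clarification of a point the paper only gestures at.
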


\begin{proof}
If $ \frac{r_b}{r_a} > \frac{t_b}{t_a}$,  then,  first,  since $i \equiv (i+t_a +t_b) \mod(t_a+t_b)$, we have $h(c_i) \equiv  h(c_{i+t_a +t_b}) \mod(t_a+t_b)$. Thus,  it suffices to prove that $h(c_{i+t_a +t_b})>h(c_i)$.  
 We have $h(c_{i+\per}) = h(c_{i}) -  a'_t t_b + b'_t t_a$.  Thus, if  $a'_t = 0$ (which implies $ b'_t = \per$) then we are done. Otherwise,  we have $\frac{b'_t}{a'_t} \geq \frac{r_b}{r_a} > \frac{t_b}{t_a} $, which gives 
 $$h(c_{i+\per}) = h(c_{i}) -  a'_t t_b + b'_t t_a  =  
 h(c_{i}) +  a'_t(-  t_b + \frac{b'_t}{a'_t} t_a)  
 >  h(c_{i}) +  a'_t(-  t_b + \frac{t_b}{t_a} t_a) = h(c_i), $$
which gives the first item. \\

If $ \frac{r_b}{r_a} < \frac{t_b}{t_a}$, then, $ h(c_{i-j}) = h(c_{i} - a_j a -b_j b) = h(c_i) + t_b a_j  - t_a b_j .$  
 Thus,  we have to prove that  $t_b a_j  - t_a b_j > t_a+ t_b$, which can be rewritten in  $\frac{t_b}{t_a}(a_j-1) > b_j +1.$ On the other hand, the condition \ref{thick} can be be rewritten in  $\frac{r_b}{r_a}(a_j-1) \geq b_j +1$ (notice, that, with our convention,  $ \frac{r_b}{r_a} > \frac{t_b}{t_a}$ implies that $r_a \neq 0$). This ensures that $a_j-1 >0$. Thus, since $ \frac{r_b}{r_a} \leq \frac{t_b}{t_a}$, we obtain: 
 $$ \frac{t_b}{t_a}(a_j-1) >  \frac{r_b}{r_a}(a_j-1) \geq b_j +1. $$ 
which is the result. \\ 

If $ \frac{r_b}{r_a} =  \frac{t_b}{t_a}$, we proceed as in the second case to get 
$$ \frac{t_b}{t_a}(a_j-1) =  \frac{r_b}{r_a}(a_j-1) \geq b_j +1. $$ 
which gives the inequality. Moreover, we have $ \frac{r_b}{r_a}(a_j-1) =  b_j +1$ if and only if  $\frac{r_b}{r_a}(a_j-1) = b_j +1$, \textit{i. e.}    $ r_b a_j - r_a b_j  = r_a+ r_b$. 
On the other hand, $ \frac{t_b}{t_a}(a_j-1) = b_j +1$ if and only if  $t_b a_j  - t_a b_j = t_a+ t_b$,  \textit{i. e.}  $h(c_{i-j}) =  h(c_i) + t_a+ t_b$. This gives the equivalence. 
\end{proof}

\begin{center}
\begin{tikzpicture}[x=0.5cm,y=0.5cm]

\path [dotted, draw, thin] (0,0) grid[step=0.5cm] (12,8);

\path [dotted, draw, thin] (16,0) grid[step=0.5cm] (28,8);

\path [draw, thick] (1,1) -- (1,2) -| (4,3) -| (7,4) -| (9,5) -| (10,7) -- (11,7);

\path [draw, thick] (17,1) -- (17,2) -| (20,3) -| (23,4) -| (25,5) -| (26,7) -- (27,7);

\node [black] at (7.5,2.4) {$\textsc{$c_i$}$};
\node [black] at (5.5,4.4) {$\textsc{$c_i-a+b$}$};
\foreach \Point in {(7,3),(6,4)}{
    \node at \Point {$\bullet$};
}

\node [black] at (23.5,2.4) {$\textsc{$c_i$}$};
\node [black] at (21.5,4.4) {$\textsc{$c_i-a+b$}$};
\foreach \Point in {(23,3),(22,4)}{
    \node at \Point {$\bullet$};
}

\draw[color=red, thick] (0,0) -- (12,8);
\draw[thick] (0,1) -- (12,7);
\draw[thick] (1,0) -- (12,5.5);

\draw[color=red, thick] (16,2) -- (28,6);
\draw[color=red, thick] (16,0.66) -- (28,4.66);

\node [black] at (12.5,5.5) {$\lR$};
\node [black] at (12.5,7) {$\lR'$};
\node [black,color=red] at (-0.5,0) {$l'_{opt}$};

\node [black,color=red] at (15.5,2) {$l'_{opt}$};
\node [black,color=red] at (15.5,0.66) {$l_{opt}$};

\foreach \Point in {(1,2)}{
    \node  at \Point {$\bullet$};}

\node [black] at (0.5,2.5) {$\textsc{$c_{i-j}$}$};

\foreach \Point in {(25,5)}{
    \node at \Point {$\bullet$};
}
\foreach \Point in {(26,4)}{
    \node at \Point[color=red] {$\bullet$};
}

\node [black] at (25,5.5) {$\textsc{$c_{i+\per}$}$};
\node [black,color=red] at (26,3.2) {$(3,1)$};

\node [black] at (6,-1) {Case (a)};
\node [black] at (22,-1) {Case (b)};

\end{tikzpicture}

\end{center}

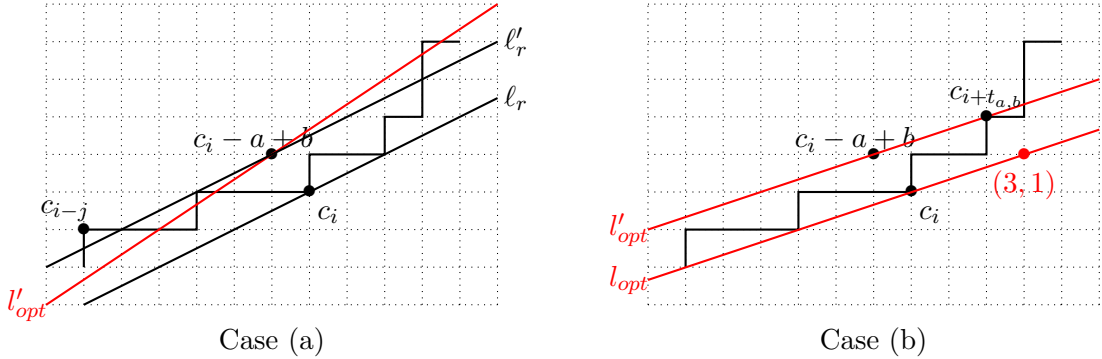
\captionof{figure}{The main ideas of  Lemma \ref{lem_haut}: the site $c_i$ has a sight of $8$, $(r_a, r_b) = (1, 2)$, thus   the slope of $\lR$ and $\lR'$is  $\frac{1}{2}$. In the case (a), the line $\lOpt'$ has a slope of $\frac{2}{3}$ and the  site $c_{i-j}$ is over $\lOpt'$. In the second case, the line $\lOpt'$ has a slope of $\frac{1}{3}$ and the  site $c_{i+ \per}$ is over $\lOpt'$. }

\begin{corollary}
\label{decrease}
For any configuration $c$, we have:  $ \lMin(c) \leq  \lMin(\delta(c)) \leq \lMax(\delta(c)) \leq \lMax(c)$,  and, therefore,   $\DDl( \delta(c)) \leq  \DDl(c)$. 
\end{corollary}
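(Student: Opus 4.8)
The plan is to reduce the statement to the effect of a single flip and then feed it to Lemma~\ref{lem_haut}, which always produces a ``witness'' site already lying at least as high (resp. as low) as the flipped site ends up. First I would dispose of the trivial part: by definition $\delta(c)$ is obtained from $c$ by performing at most one flip, at a uniformly chosen active site; if no flip is performed then $\delta(c)=c$ and there is nothing to prove. So I fix the flipped site $c_i$. A flip changes the height of exactly this one site, leaving $h(c_j)$ unchanged for all $j\neq i$, and it changes $h(c_i)$ by exactly $\pm\per$ (increasing flip: $+\per$; decreasing flip: $-\per$). Consequently, for an increasing flip the value $\lMin$ cannot decrease, since raising a single height can never lower the minimum over all sites, so only the bound $\lMax(\delta(c))\leq\lMax(c)$ needs an argument; symmetrically, for a decreasing flip only $\lMin(\delta(c))\geq\lMin(c)$ is at stake.

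Next I would note that the two remaining cases are interchanged by the central symmetry $(x,y)\mapsto(A-x,B-y)$. This point reflection sends a configuration to a configuration (the reversed path), negates every height since $h(A-x,B-y)=-h(x,y)$ (using $-t_bA+t_aB=0$), exchanges the roles of $\lMax$ and $-\lMin$, swaps the left and right words, and turns increasing flips into decreasing ones. Because $\delta$ is totally symmetric it commutes with this reflection, so it suffices to treat one case, say an increasing flip, and establish $\lMax(\delta(c))\leq\lMax(c)$.

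For the core step, an increasing flip means the local pattern at $c_i$ is $w_i=a$, $w_{i+1}=b$, and $c_i$ is active; after using the $g$-invariance of $\delta$ together with the relation $\delta^l(w,w')=\delta^r(w',w)$ to place ourselves in the canonical situation $w'_1=b$ of the construction, the pair $(w^l(c,i),w^r(c,i))$ satisfies both the sight and the thickness constraints. Lemma~\ref{lem_haut} then yields, in each of its three cases, a site $c_k$ with $k\in\{i+\per,\ i-j\}$ and $h(c_k)\geq h(c_i)+\per$. The visibility hypothesis $\per\leq s$ together with the sight constraint guarantee that $k$ lies in $\{0,\dots,\tot\}$, so $c_k$ is a genuine site of $c$; moreover $k\neq i$, so $c_k$ is untouched by the flip. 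Hence $\lMax(c)\geq h(c_k)\geq h(c_i)+\per$, and $h(c_i)+\per$ is precisely the new height of the flipped site, which gives $\lMax(\delta(c))\leq\lMax(c)$.

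Combining the two symmetric halves yields $\lMin(c)\leq\lMin(\delta(c))\leq\lMax(\delta(c))\leq\lMax(c)$, whence $\DDl(\delta(c))=\lMax(\delta(c))-\lMin(\delta(c))\leq\lMax(c)-\lMin(c)=\DDl(c)$. All the substantive geometry sits in Lemma~\ref{lem_haut}; the only delicate point here is the bookkeeping that matches each of the four sub-cases (increasing or decreasing flip, triggered by $\delta^r$ or by $\delta^l$) to the canonical case of the lemma through the $g$- and central symmetries. I expect that matching to be the main obstacle to a clean write-up, whereas the inequalities themselves are immediate once the witness site $c_k$ is in hand.
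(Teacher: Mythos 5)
Your proof is correct and takes essentially the route the paper intends: the corollary is stated without an explicit proof precisely because, as you show, Lemma~\ref{lem_haut} always exhibits a witness site already at height at least $h(c_i)+\per$ (resp.\ at most $h(c_i)-\per$), so the flipped site's new height cannot create a new extremum, and the remaining cases follow by the total symmetry of $\delta$. Your bookkeeping of the four sub-cases via the $g$- and central symmetries, and the check that the witness index stays in range thanks to the sight constraint, match what the authors leave implicit.
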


We also have the corollary below, noticing that in a site was flipped in a   Christoffel configuration, then either 
$\lMax$ would be  increased, which is not possible,   or $\lMin$ would be decreased, which is also impossible, by symmetry of the process.

\begin{corollary}
 Christoffel configurations are stable for any process satisfying the constraints above. 
\end{corollary}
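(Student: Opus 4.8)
The plan is to argue by contradiction, using Corollary~\ref{decrease} as the main engine. Let $c$ be a Christoffel configuration, so that by definition $\DDl(c) = \lMax(c) - \lMin(c) = \per - 1$, the minimal possible thickness. Suppose toward a contradiction that some site $c_i$ is active in $c$. Then with positive probability the process flips $c_i$, and I would examine the resulting configuration $c' = \dyn(c)$ in that realization, recalling that a flip modifies the height of exactly the flipped site, by $+\per$ for an increasing flip and by $-\per$ for a decreasing flip, while leaving every other height unchanged.

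First I would treat the increasing flip. Here $h(c'_i) = h(c_i) + \per \geq \lMin(c) + \per = \lMax(c) + 1 > \lMax(c)$, since every height of $c$ is at least $\lMin(c)$ and $\lMax(c) = \lMin(c) + \per - 1$. Hence $\lMax(c') > \lMax(c)$, which directly contradicts the inequality $\lMax(\dyn(c)) \leq \lMax(c)$ of Corollary~\ref{decrease}. The decreasing flip is the mirror case: $h(c'_i) = h(c_i) - \per \leq \lMax(c) - \per = \lMin(c) - 1 < \lMin(c)$, so $\lMin(c') < \lMin(c)$, contradicting $\lMin(c) \leq \lMin(\dyn(c))$. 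Since a flip at $c_i$ is necessarily either increasing or decreasing, both cases are impossible, so no site is active and $c$ is stable.

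This is precisely the symmetry alluded to in the informal remark preceding the statement: the increasing/decreasing dichotomy is handled by the two symmetric inequalities $\lMin(c) \leq \lMin(\dyn(c))$ and $\lMax(\dyn(c)) \leq \lMax(c)$ packaged in Corollary~\ref{decrease}. I do not expect any genuine obstacle, since the substantive work has already been done in Lemma~\ref{lem_haut} and Corollary~\ref{decrease}; the only points requiring care are that a flip shifts a single height by exactly $\per$ units and that the Christoffel width $\per - 1$ is strictly smaller than $\per$, which is exactly what forces any flip to breach the strip and violate the monotonicity of $\lMin$ and $\lMax$. Finally, the statement is legitimately phrased for the whole class of processes satisfying the constraints, because Corollary~\ref{decrease} itself is established for all such processes.
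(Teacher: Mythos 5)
Your proof is correct and is essentially the paper's own argument, made explicit: the paper's one-line justification is precisely that a flip in a Christoffel configuration would either raise $\lMax$ or lower $\lMin$, contradicting Corollary~2 (the monotonicity of $\lMin$ and $\lMax$ under the process). Your computation showing that the width $\per-1 < \per$ forces any $\pm\per$ height change to breach the strip is exactly the detail the paper leaves implicit.
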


\textbf{Strong thickness  constraint:} Assume now that previous constrains are both satisfied, and that, when a flip is done on $i$, then the new site indexed by $i$  is of maximal height. This may happen when $h(c_i ) + \per = \lMax$. 
If the maximal height appears in $i$, then the closest indices where the maximal height can eventually also be reached are $i+ \per$ and  $i- \per$, because of congruence conditions of Property 1. 
We want to be sure that our process creates no isolated maximum: if  the  maximal  height  $\lMax$ is reached in $i$, then $i$ is not an isolated maximum, in the sense of   $h(c_{i+ \per}) = \lMax $ or $h(c_{i- \per})  = \lMax$.  This is useful in the analysis, for energy compensations. 

But in the same time, we want to allow a sufficient instability to the process in order to make it move to a better configuration.  This is ensured by enforcing the weak thickness constraint as follows.
\begin{align} 
\exists   j \in \{ 1, 2, ..., \sight   \}  \,\, \vert  \,\, a_{j} - r_a b_{j} > r_a+ r_b \lor  (a_{j} - r_a b_{j} = r_a+ r_b \land gcd(a_{j }-1, b_{j}+1) = 1).
 \label{gcd}
\end{align} 
The strong constraint  adds that if all sites  $c_{i-j}$ are in the limit line, directed by $(r_a, r_b)$ passing through the site $c_i-a+b$, then there exists such a site such that  the components of the vector $c_i-a+b - c_{i-j}$ are relatively prime. 

If this strong constraint is satisfied, then the  weak constraint is automatically satisfied.  Nevertheless we prefer to present the process by this way, in order to have a real understanding of the motivations of the rules.   

\begin{lemma}\label{isole}
Assume that $(t_a, t_b)$ is visible by $\sight$. 
Let $c$ be a configuration and $i \in \{1, 2, ..., \tot -1\}$ We state  $(w^l(c,i), (w^r(c,i)) = (w, w') $. 
  Assume that $(w, w')$ satisfies the three constraints above and $h(c_i ) + \per = \lMax(c)$.
 
Then $h(c_{i+ \per}) = \lMax(c) $ or $h(c_{i- \per})  = \lMax(c)$. 
\end{lemma}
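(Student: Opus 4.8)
The plan is to split into three cases according to how the slope $r_b/r_a$ of the auxiliary pair $(r_a,r_b)$ compares with the target slope $t_b/t_a$, and to feed each case into Lemma~\ref{lem_haut}. First I would record the governing geometric fact: since $h(c_i-a+b)=h(c_i)+\per=\lMax(c)$ and the limit line $\lR'$ through $c_i-a+b$ has slope $r_b/r_a$, the line $\lR'$ coincides with the maximal-height line $\{-t_b x+t_a y=\lMax(c)\}$ precisely when $r_b/r_a=t_b/t_a$. This observation is what turns the abstract condition ``$c_{i-j}$ lies on the limit line'' into the concrete statement ``$h(c_{i-j})=\lMax(c)$''.

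In the case $r_b/r_a>t_b/t_a$, the first item of Lemma~\ref{lem_haut} gives $h(c_{i+\per})\ge h(c_i)+\per=\lMax(c)$; since $\lMax(c)$ is the maximum this forces $h(c_{i+\per})=\lMax(c)$, so the right neighbour realises the maximum and we are done. In the case $r_b/r_a<t_b/t_a$, taking any $j$ witnessing the weak thickness constraint, the second item of Lemma~\ref{lem_haut} yields $h(c_{i-j})>h(c_i)+\per=\lMax(c)$, which contradicts maximality; hence under the hypothesis $h(c_i)+\per=\lMax(c)$ this case simply cannot occur. Thus the only substantive work lies in the equality case.

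For $r_b/r_a=t_b/t_a$ I would invoke the \emph{strong} thickness constraint (\ref{gcd}) to produce a distinguished index $j$. Because no site can sit strictly above the maximal-height line $\lR'$, the strict disjunct of (\ref{gcd}) is vacuous here, so the chosen $j$ must satisfy the equality disjunct: $c_{i-j}$ lies exactly on $\lR'$ and the vector $c_i-a+b-c_{i-j}=(a_j-1,b_j+1)$ is primitive. Both endpoints of this vector lie on $\lR'$, so it is directed along $\lR'$ and has slope $t_b/t_a$; being primitive with positive second coordinate, it must equal the primitive direction $(t_a,t_b)$ (using $\gcd(t_a,t_b)=1$). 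This pins down $a_j=t_a+1$ and $b_j=t_b-1$, whence $j=a_j+b_j=\per$ (and $\per\le\sight$ by visibility), so $h(c_{i-\per})=h(c_{i-j})=\lMax(c)$ and the left neighbour realises the maximum.

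The main obstacle, and the very reason the strong constraint was introduced, is exactly this last step: in the equality case the weak constraint alone guarantees only that \emph{some} $c_{i-j}$ sits on $\lR'$, with no control on $j$, so it does not by itself place a maximum at the congruent index $i-\per$. The primitivity/gcd condition is precisely the extra input needed to force the witnessing difference vector to be the minimal period $(t_a,t_b)$ rather than a larger multiple of it, thereby identifying $j$ with $\per$. I would finally double-check the book-keeping that $(a_j-1,b_j+1)$ (rather than its negative) is the correctly oriented difference, but this is routine once the geometric picture above is fixed.
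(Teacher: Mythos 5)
Your proposal is correct and follows essentially the same route as the paper's proof: the same three-way case split on $\frac{r_b}{r_a}$ versus $\frac{t_b}{t_a}$ fed into Lemma~\ref{lem_haut}, with the first case handled by the right neighbour $c_{i+\per}$, the second ruled out by maximality, and the equality case resolved by using the gcd clause of the strong constraint to force the difference vector $(a_j-1,\,b_j+1)$ to be exactly $(t_a,t_b)$ and hence $j=\per$. The paper phrases this last step via Property~1 (the difference is $k(t_a,t_b)$ and primitivity forces $k=-1$), which is the same argument as your ``primitive vector along $\lR'$ must be the primitive direction'' formulation.
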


\begin{proof}
Lemma \ref{lem_haut} directly gives the result when  $ \frac{r_b}{r_a}  >  \frac{t_b}{t_a}$, and, from Lemma \ref{lem_haut} the hypotheses cannot occur when $ \frac{r_b}{r_a}  <  \frac{t_b}{t_a}$.
 Thus it remains to study the case when  $ \frac{r_b}{r_a} =  \frac{t_b}{t_a}$ and  $h(c_i) + \per = h(c_{i-j})$, which ensures that  $r_b a_{j} - r_a b_{j} = r_a+ r_b$, from Lemma \ref{lem_haut}. 
 
  In this case  $h(c_{i-j} +a-b)= h(c_{i-j} ) -t_b -t_a= h(c_i)$, thus, from Property 1, there exists an integer $k$ such that  $c_{i-j} +a-b- c_i = k(t_a, t_b)$, \emph{i.e.}  $(-a_{j} +1, -b_{j }-1) = k(t_a, t_b)$.  We have  $-b_{j} -1 < 0$ and,  from the relative primarity  constraint, $gcd(a_{j }-1, b_{j }+1) = 1$. Thus,  we necessarily have $k = -1$, which gives that
$j   =  a_{j }+ b_{j } =  a_{j }-1 + b_{j }+1 = t_a +t_b  = \per$.  Thus  $h(c_i) +\per = h(c_{i-\per})$. 
\end{proof}

%
%

\section{Analysis}
\label{analyse}

\subsection{The energy lemma}

We start by presenting the lemma used to prove time efficiency of our process. Lemma~\ref{lem:arg:mart} is a classical result on martingales, its proof can be found in~\cite{FMST06}. The way to use this lemma is to affect a value between $0$ and $k$ (with $k \in \mathbb{N}$) to each configuration, this value will be called the \emph{energy} $E(c)$ of configuration $c$. If wisely defined, this energy will behave as a random walk: its expected variation will be less than $0$ for any configuration. A non-biased one dimensional random walk on $\{0,\ldots, k\}$ hits the value $0$ on $O(k^2)$ time step. Once again if the energy is wisely defined, when the energy function hits $0$ then an irreversible update towards a stable configuration is done and by repeating this reasoning, we show that our process hits a stable configuration in polynomial time. A key part of this lemma is to bound the expected variation of energy, so we introduce the following notation:
$$\DD{E}(\conf^t) = E(\conf^{t+1})-E(\conf^{t}) \text{ (or } \DD{E}(\conf) = E(\dyn(\conf))-E(\conf)).$$

\begin{lemma} \label{lem:arg:mart}
Let $k \in \mathbb N$ and $\epsilon >0$. Consider $(\conf^t)_{t \geq 0}$ a random sequence of configurations, and $E:\Spath\rightarrow \N $ an \emph{energy} function. Let  $T=\min\{t: E(\conf^t) = 0\}$  be the the random variable which denotes the first time $t$ where $E(\conf^t) = 0$.
Assume that,   for  any $c$ such that $E(\conf) > 0$, we conjointly have:
\begin{itemize}
\item $E(c) \leq k$, 
\item $\expect[\DD{E}(\conf)|\conf] \leq 0$,
\item $Prob\{|\DD E(\conf)|\geq 1\}\geq \epsilon$.  
\end{itemize}
Then, $\expect[T] \leq \frac{k E (c^0)}{\epsilon}$   
\end{lemma}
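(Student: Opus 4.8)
The plan is to treat $X_t = E(\conf^t)$ as a non-negative, integer-valued supermartingale living on $\{0,1,\dots,k\}$, and to bound its expected absorption time $T$ at $0$ by exhibiting a potential (Lyapunov) function that turns the three hypotheses into a single uniform, strictly negative drift. The first two hypotheses on their own only say that $X_t$ has a tendency not to increase; they give no rate, since $\expect[\DD E\mid\conf]$ could be exactly $0$. The role of the third hypothesis (a jump of size at least $1$ happens with probability at least $\epsilon$) is precisely to force genuine progress, and the standard device for marrying a supermartingale to a jump-size lower bound is a \emph{concave} potential, whose curvature converts the jump probability into a quantitative decrease.

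Concretely, I would take $f(x) = x(2k-x) = 2kx - x^2$, which is non-negative, vanishes at $0$, is increasing on $[0,k]$, and is concave. Writing $\DD E = X_{t+1}-X_t$ and expanding, one gets the identity $f(X_{t+1}) - f(X_t) = 2(k-X_t)\,\DD E - (\DD E)^2$. Taking the conditional expectation given $\conf^t$ on the event $\{E(\conf^t)>0\}$: the first term satisfies $2(k-E(\conf^t))\,\expect[\DD E\mid \conf^t] \le 0$, because $k-E(\conf^t)\ge 0$ (as $E\le k$) while $\expect[\DD E\mid \conf^t]\le 0$; and the second term contributes at most $-\epsilon$, because $(\DD E)^2 \ge \mathbbm{1}[\,|\DD E|\ge 1\,]$ forces $\expect[(\DD E)^2\mid \conf^t] \ge \mathrm{Prob}\{|\DD E|\ge 1\mid \conf^t\} \ge \epsilon$. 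Hence $\expect[f(X_{t+1})-f(X_t)\mid \conf^t] \le -\epsilon$ whenever $E(\conf^t)>0$.

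With the drift bound in hand the rest is bookkeeping. The stopped process $M_t = f(X_{t\wedge T}) + \epsilon\,(t\wedge T)$ is a supermartingale: for $t<T$ the drift inequality gives $\expect[M_{t+1}-M_t\mid \mathcal F_t] \le -\epsilon + \epsilon = 0$, while $M_t$ is constant for $t\ge T$. Since $f\ge 0$, this yields $\epsilon\,\expect[t\wedge T] \le \expect[M_t] \le M_0 = f(E(c^0))$, and letting $t\to\infty$ by monotone convergence gives $\expect[T] \le f(E(c^0))/\epsilon = E(c^0)\,(2k-E(c^0))/\epsilon$, which is of the announced order $O\!\left(k\,E(c^0)/\epsilon\right)$ (a sharper normalization of the potential, or the argument of \cite{FMST06}, pins down the exact constant).

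The only genuinely delicate step is the drift inequality, and in particular the way the boundedness $E\le k$ and the supermartingale hypothesis must cooperate: boundedness is what lets me sign the first-order term $2(k-X_t)\,\DD E$, and only then does the jump hypothesis supply the strict decrease through the quadratic term. Everything downstream — integrability and the optional-stopping/telescoping step — is automatic here, since $f$ is bounded on $\{0,\dots,k\}$ and the compensator $\epsilon\,(t\wedge T)$ is monotone, so no delicate uniform-integrability or convergence issues arise.
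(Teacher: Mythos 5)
The paper does not actually prove this lemma --- it defers to \cite{FMST06} --- so there is no internal proof to compare against; your argument has to stand on its own, and in substance it does. The concave-potential computation is correct: with $f(x)=x(2k-x)$ the identity $f(X_{t+1})-f(X_t)=2(k-X_t)\DD E-(\DD E)^2$ holds, the first term is signed by combining $E\leq k$ with the supermartingale hypothesis, and since $E$ is $\N$-valued, $\DD E$ is an integer, so $(\DD E)^2\geq \mathbbm{1}\{|\DD E|\geq 1\}$ and the quadratic term yields the drift $-\epsilon$. The optional-stopping step is unproblematic for the reasons you give, and it also delivers $T<\infty$ almost surely. This is the standard route for such ``energy'' lemmas and is surely close to what \cite{FMST06} does.

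The one point to correct is your closing parenthetical. You prove $\expect[T]\leq E(c^0)\bigl(2k-E(c^0)\bigr)/\epsilon\leq 2kE(c^0)/\epsilon$, and no renormalization of the potential will ``pin down'' the stated constant $kE(c^0)/\epsilon$, because that constant is not attainable in general: take the unbiased $\pm 1$ walk on $\{0,\dots,k\}$ reflected at $k$ and absorbed at $0$, which satisfies all three hypotheses with $\epsilon=1$; its expected hitting time of $0$ from $x$ is exactly $x(2k-x)$, so from $E(c^0)=1$ it equals $2k-1>k=kE(c^0)/\epsilon$. Your bound $E(c^0)(2k-E(c^0))/\epsilon$ is in fact the tight one, and the lemma as printed should be read with an extra factor of $2$ (harmless everywhere it is used in the paper, since all downstream bounds are $O(n^4)$ statements). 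So: do not promise the exact stated constant; state the bound you actually obtain, and note that it is sharp.
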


\subsection{Our specific energy}
Our strategy consists in using Lemma \ref{lem:arg:mart} for an ``ad hoc'' energy function, that we will define now. 
Fix a configuration $\conf^0$ such that $\lMax(\conf^0) \geq t$.  The energy ${E}(c)$ of any configuration $c$ is defined as follows. 
If  $\lMax(c) \neq \lMax (\conf^0)$, 
 then ${E}(c) = 0$.  If  
 $\lMax(c) =  \lMax(\conf^0)$,  
 then consider the set 
$$Border^+ = \{i \in  \{1 , 2, ... \tot -1 \}, \exists i_0    \in  \{1 , 2, ... \tot -1 \} \vert h(\conf^0_{i_0}) =  \lMax(\conf^0) \mbox{ and  } i \equiv  i_0 \mod (\per) \}. $$

We recall that if $i$ and $j$ are both elements $Border^+$, then  $i \equiv j \mod (\per) $. Remark that $n-1 \leq \vert Border^+ \vert  \leq n$. We define the following sets : 

$$Top^+(c)   =   \{ i    \in  \{1 , 2, ... \tot -1 \} \vert h(\conf_{i}) =  \lMax(\conf^0) \} $$   
$$Down^+(c)  =  \{  i \in  Top^+(c), i+\per  \leq \tot,  i+\per \notin Top^+(c) \}$$
 $$Up^+(c) =   \{ ( i \in  Top^+(c), i-\per \geq 0 \in   i-\per \notin Top^+(c) \}$$

Notice that we have $Top^+(c) \subseteq Border^+$. 
The energy ${E}(c)$ of the configuration $c$ is the sum: 
$${E}(c) =  2 \vert Top^+(c)  \vert + \vert Down^+(c)  \vert+ \vert Up^+(c)  \vert$$


\begin{prop} 
\label{apply}
When  $\lMax(\conf^0) \geq t$, 
the energy defined above satisfies hypotheses of Lemma \ref{lem:arg:mart} with $\epsilon = \frac{1}{\tot-1}$, and $k = 3n$ 
 \end{prop}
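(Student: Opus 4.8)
The plan is to verify the three hypotheses of Lemma \ref{lem:arg:mart} in turn, writing $L = \lMax(\conf^0)$ throughout. I would first record the guiding observation: by Corollary \ref{decrease} the quantity $\lMax$ never increases along the process, so once $\lMax(\conf) < L$ it remains so forever. Hence $E(\conf) = 0$ exactly when $\lMax(\conf) < L$, while $E(\conf) > 0$ forces $\lMax(\conf) = L$ and $Top^+(\conf) \ne \emptyset$. Thus $T$ is precisely the first time the top level $L$ is vacated, and the global polynomial convergence to a Christoffel configuration then follows by iterating this single-level statement as $\lMax$ decreases.

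For the bound $E(\conf) \le k = 3n$ I would exploit that $Top^+(\conf) \subseteq Border^+$ and that $Border^+$ is an arithmetic progression of common difference $\per$ with at most $n$ admissible indices. Decompose $Top^+(\conf)$ into its maximal runs of consecutive progression terms; if there are $R$ runs, each contributes at most one index to $Up^+(\conf)$ (its lower end) and at most one to $Down^+(\conf)$ (its upper end), so $|Up^+(\conf)| + |Down^+(\conf)| \le 2R$. Since distinct runs are separated by at least one missing index of $Border^+$, we get $|Top^+(\conf)| + (R-1) \le n$, hence $R \le n - |Top^+(\conf)| + 1$ and
\[
E(\conf) = 2|Top^+(\conf)| + |Up^+(\conf)| + |Down^+(\conf)| \le 2|Top^+(\conf)| + 2R \le 2n+2 \le 3n
\]
for $n \ge 2$, the degenerate case $n=1$ (where $\tot = \per$) being treated directly.

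The heart of the argument, and the step I expect to be the main obstacle, is the supermartingale inequality $\expect[\DD{E}(\conf)\mid\conf] \le 0$. Since $E$ depends only on which sites attain height $L$, and $\lMax$ cannot exceed $L$, the sole energy-changing flips are the increasing flips carrying a site from $L-\per$ up to $L$ (\emph{extensions}) and the decreasing flips carrying a site from $L$ down to $L-\per$ (\emph{retractions}). Reusing the run bookkeeping above, every retraction has $\DD{E} \in \{0,-2,-4\}$, so retractions never raise the energy, the value $-4$ occurring exactly when an isolated maximum is destroyed. An extension has $\DD{E} \in \{0,+2\}$: here the crucial input is Lemma \ref{isole}, which forbids an active flip from \emph{creating} an isolated maximum, so no extension can contribute $+4$. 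The genuinely delicate task is then a local balance at each run boundary: I would show that whenever the extending flip at a run end is active, the competing retracting flip at that same end is active as well, pinning each active $+2$ against an active $-2$; combined with the favorable $-4$ from isolated maxima this gives $\sum_{i\text{ active}} \DD{E}_i \le 0$. Proving this pairing forces one to unfold the precise thickness constraints defining $\delta^r$ (Lemma \ref{lem_haut}) on the words bordering the top level, which is the combinatorial core linking the process to the behaviour of \textsc{ECA} $178$.

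Finally, for the third hypothesis I would exhibit one active energy-changing site. Let $i^\ast$ be the least index with $h(\conf_{i^\ast}) = L$. Because consecutive heights differ by $+t_a$ or $-t_b$ and never by $0$, maximality of $L$ forces $w_{i^\ast}=b$ and $w_{i^\ast+1}=a$, so $\conf_{i^\ast}$ is a strict peak admitting a decreasing flip; minimality of $i^\ast$ gives $i^\ast - \per \notin Top^+(\conf)$, so $\conf_{i^\ast}$ is the lower end of its run and the retraction there has $\DD{E} \in \{-2,-4\}$, whence $|\DD{E}| \ge 1$. One checks that lowering this extremal peak meets the three constraints: it cannot increase the thickness, and the threshold $\lMax(\conf^0) \ge t$ guarantees room below, i.e. $L - \per \ge \lMin(\conf)$, so the site is active. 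It is selected with probability $\tfrac{1}{\tot-1}$, giving $Prob\{|\DD{E}(\conf)| \ge 1\} \ge \tfrac{1}{\tot-1} = \epsilon$. With the three hypotheses established, Lemma \ref{lem:arg:mart} applies with $k = 3n$ and $\epsilon = \tfrac{1}{\tot-1}$, which completes the proof.
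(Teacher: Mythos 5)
Your plan follows the paper's proof almost step for step. The paper also bounds $E$ by counting indices of $Border^+$; it also proves the drift condition by grouping $Border^+$ into small clusters (its explicit partition $P^+(c)$ into subsets of at most three consecutive elements is your run decomposition viewed locally: each two-element cluster pins a potential $+2$ extension at a run end against the $-2$ retraction at the adjacent top site, and the three-element cluster $\{i-\per,i,i+\per\}$ is your $-4$ from destroying an isolated maximum set against two $+2$'s); it likewise invokes Lemma \ref{isole} to exclude a $+4$ extension, and it uses the lowest-index element of $Top^+$ for the third hypothesis, with the same role for the hypothesis $\lMax(\conf^0)\ge \per$. Your run-counting bound $E\le 2n+2\le 3n$ is if anything cleaner than the paper's.

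The one place where you stop short is exactly the step you flag as ``the genuinely delicate task'': the activity of the retracting flip at a run boundary. The paper does not prove your conditional version (extension active $\Rightarrow$ retraction active); it proves the stronger unconditional statement as a separate Lemma \ref{active}: any site at height $\lMax(c)$ that has, within index distance $\per$, a site at height at most $\lMax(c)-\per$ satisfies all three constraints of $\delta$, hence is active. That lemma is where the hypothesis $\lMax(\conf^0)\ge\per$ actually enters (it supplies $c_0$ as the low comparison site when the extremal top index is smaller than $\per$), and it is also where the strong thickness constraint must be checked: in the borderline case $\frac{r_b}{r_a}=\frac{t_b}{t_a}$ one must extract $\gcd(a_s-1,b_s+1)=\gcd(t_a,t_b)=1$, which is precisely why the gcd clause was built into the rule. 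So your ``one checks that lowering this extremal peak meets the three constraints'' is not a formality but the combinatorial core of the proposition; until that activity lemma is written out, both the supermartingale inequality and the $\epsilon$-progress bound remain conditional.
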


 We first need the following lemma.

\begin{lemma}
\label{active}
 Let $c$ be a configuration. Assume  that there exists $i \leq \tot -\per$  and $j$ 
such that $ h(c_{i}) =  \lMax(c)  $ (respectively  $ h(c_{i}) =  \lMin(c)  $), 
 and there exists $j$ such that $ 0 < j \leq \per$ and  $ h(c_{i-j})  + \per \leq  \lMax(c)$   (respectively    $ h(c_{i-j})  - \per \geq  \lMin(c)  $).
 
 Then, the site $c_i$ is active in $c$. 
\end{lemma}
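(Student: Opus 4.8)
The plan is to use the total symmetry of the rule to reduce to the single case $h(\conf_i)=\lMax(\conf)$, and then to exhibit one branch of $\delta=\max\{\delta^r,\delta^l\}$ whose three constraints (sight, weak thickness, strong thickness) all hold. First I would invoke the diagonal symmetry $g$ (which negates $h$ and exchanges $\lMax$ with $\lMin$) to pass from the $\lMin$ statement to the $\lMax$ statement, so that it suffices to treat a globally maximal site with a much \emph{lower} site on its left. Since the hypothesis provides $j$ with $0<j\le\per$ and $\conf_{i-j}$ a genuine site, we have $i\ge 1$, while $i\le\tot-\per\le\tot-1$; hence $\conf_{i-1}$ and $\conf_{i+1}$ both exist, and maximality of $\conf_i$ forces $h(\conf_{i-1})\le h(\conf_i)$ and $h(\conf_{i+1})\le h(\conf_i)$, i.e. $w_i=b$ and $w_{i+1}=a$. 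There is therefore a genuine (decreasing) corner at $\conf_i$: the flip is admissible and, because $w_i\neq w_{i+1}$, neither $\delta^r(w^l,w^r)$ nor $\delta^r(w^r,w^l)$ vanishes for the trivial reason $w_1=w'_1$.

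Next I would select the branch according to the side on which $\conf_i$ enjoys full sight. Recall that in $\delta^r(w,w')$ the reference pair $(r_a,r_b)$ is read off the \emph{second} word while the test~\eqref{thick} ranges over sites on the side of the \emph{first} word; thus $\delta^r(w^l,w^r)$ reads its slope on the right and probes sites on the left. To exploit the low site $\conf_{i-j}$ handed to us I would use $\delta^r(w^l,w^r)$, reduced by $\delta^r(w,w')=\delta^r(g(w),g(w'))$ since $w^r_1=a$; this requires $|w^r|=\sight$, i.e. $i+\sight\le\tot$. When $\conf_i$ lies closer than $\sight$ to the right endpoint I would instead use the mirror branch $\delta^r(w^r,w^l)$, which reads the slope on the left (so that its sight constraint $|w^l|=\sight$ now holds) and probes sites on the right; a qualifying low site on the right is then furnished by Property~1, since $h(\conf_{i+\per})\equiv h(\conf_i)\pmod{\per}$ together with $h(\conf_{i+\per})\le\lMax(\conf)$ gives either $h(\conf_{i+\per})\le h(\conf_i)-\per$ or a new maximum at $i+\per$.

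With a branch fixed the sight constraint holds by construction, and I would verify the weak thickness constraint~\eqref{thick} by the height computation underlying Lemma~\ref{lem_haut}, read in the converse direction. Writing $h(\conf_{i-j})=h(\conf_i)+\alpha_j\,t_b-\beta_j\,t_a$, where $\alpha_j,\beta_j$ count the letters $a,b$ read between $\conf_{i-j}$ and $\conf_i$, the hypothesis $h(\conf_{i-j})\le h(\conf_i)-\per$ becomes $\beta_j t_a-\alpha_j t_b\ge t_a+t_b$, which yields the inequality demanded by~\eqref{thick} for the slope the rule reads, once $(r_a,r_b)$ is known to lie on the correct side of $(t_a,t_b)$; the three-way comparison of slopes in Lemma~\ref{lem_haut} (dualised by $g$) shows that maximality of $\conf_i$ excludes the contrary side. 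For the strong thickness constraint~\eqref{gcd} I would take the least index $j$ meeting~\eqref{thick}: if the inequality is strict the first disjunct of~\eqref{gcd} holds, and in the equality case the computation already carried out in the proof of Lemma~\ref{isole} forces $j=\per$ and the primitivity $\gcd(a_j-1,b_j+1)=1$.

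The principal difficulty is exactly this converse use of Lemma~\ref{lem_haut}: one must argue that the slope $(r_a,r_b)$ the rule actually extracts from the full-sight far word falls on the side of $(t_a,t_b)$ that makes the supplied low site satisfy~\eqref{thick}, and it is the global maximality of $\conf_i$ that rules out the bad side. The remaining, more clerical obstacle is the boundary analysis of the second paragraph, namely certifying that at least one branch has full sight on the side carrying a qualifying low site. The delicate sub-case is a chain of equal maxima spaced by $\per$ running up to an endpoint; there I would fall back on the endpoint value $h(\conf_{\tot})=0$, which is itself a low site once $\lMax(\conf)\ge\per$ (the standing hypothesis of Proposition~\ref{apply}, where this lemma is applied), and treat the low-thickness remainder separately, this being tied to the non-uniqueness of the limit configuration already flagged in the introduction.
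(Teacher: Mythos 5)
Your core argument coincides with the paper's. The paper reduces by symmetry to the \emph{minimum} case and verifies the three constraints for the single branch $\delta^r\bigl(w^l(c,i),w^r(c,i)\bigr)$: the pair $(r_a,r_b)$ read from the far word satisfies $\frac{r_b}{r_a}\geq\frac{t_b}{t_a}$ precisely because $c_i$ is a global extremum; the hypothesis on $c_{i-j}$ translates into $t_b a_j-t_a b_j\geq t_a+t_b$ and hence into \eqref{thick}; and the equality case is settled exactly as you describe, forcing $j=\per$ and $\gcd(a_j-1,b_j+1)=\gcd(t_a,t_b)=1$, which gives \eqref{gcd}. Up to working with the maximum instead of the minimum and deferring the arithmetic to Lemmas~\ref{lem_haut} and~\ref{isole}, your first and third paragraphs \emph{are} the paper's proof.

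The divergence --- and the gap --- is your second paragraph. The paper never switches branches: it dispatches the sight constraint in one line from $i\leq\tot-\per$ and stays with $\delta^r(w^l,w^r)$ throughout. You correctly note that $|w^r(c,i)|=\sight$ literally requires $i\leq\tot-\sight$, which is stronger than $i\leq\tot-\per$ when $\per<\sight$, and you propose the mirror branch $\delta^r(w^r,w^l)$ as a fallback. But that branch probes sites to the \emph{right}, and your own dichotomy (``either $h(c_{i+\per})\leq h(c_i)-\per$ or a new maximum at $i+\per$'') leaves the second alternative open: if $c_{i+\per},c_{i+2\per},\dots$ within sight are all maxima, the mirror branch has no qualifying low site and you obtain no active constraint at $c_i$. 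You flag this yourself and defer it to the endpoint value $h(c_{\tot})=0$ and a ``remainder treated separately'', but that is a plan rather than a proof; moreover $h(c_{\tot})=0$ at best shows that \emph{some} site in the chain of maxima is active, not that $c_i$ itself is, which is what the lemma asserts. So either you adopt the paper's reading of the sight constraint, in which case the whole second paragraph is unnecessary, or you insist on the literal reading, in which case this sub-case is a genuine hole that your proposal does not close.
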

 
 \begin{proof} By symmetry, it suffices to prove it for the minimum case. 
If $ h(c_{i}) =  \lMin(c) $, the first letter or  $w^r(c, i)$ is $b$, and the first letter or  $w^l(c, i)$ is $a$. 
  So, we state $ w = w^l(c, i)$, $w' =g( w^l(c, i)$,  and we have to prove that $\delta^r(w, w') = 1$, \textit{i. e.} the pair
  $(w, w')$ satisfies the  constraints. 
  
  First, the  hypothesis $i \leq \tot - \per$ ensures that the sight constraint is satisfied. 
 Afterwards, 
 we have  
 $ h(c_{i-j}) = h(c_{i} - a_j a -b_j b) = h(c_i) + t_b a_j  - t_a b_j $. 
 
If there exists $j < s$ satisfying the hypothesis, then, from  Property 1,   $ h(c_{i-j})  - \per \geq  h(c_{i})$ gives actually the strict inequality:  $ h(c_{i-j})  - \per >  h(c_{i})$.  
  Thus,  $ h(c_{i})  +  t_b a_j  - t_a b_j  -\per >   h(c_{i})$, which gives   $t_b a_j  - t_a b_j > t_a+ t_b$, which can be rewritten in $\frac{t_b}{ t_a}(a_j -1) > b_j +1$. 
  On the other hand, since $ h(c_{i}) =  \lMin(c) $, we have $ \frac{r_b}{ r_a} \geq \frac{t_b}{ t_a}$. 
 Therefore, we get: $\frac{r_b}{ r_a}(a_j -1) > b_j + 1$, \textit{i. e.}  $r_b a_j  - r_a b_j > r_a +r_b$: the strong thickness constraint is satisfied.
 
 If $s$ is the only possible $j$ satisfying the hypothesis, then we have two alternatives. Either  $ h(c_{i-s})  - \per >  h(c_{i})$, and the  arguments of the paragraph just above  can be used to conclude, or   $ h(c_{i-s})  - \per  =   h(c_{i})$. The latter alternative  gives  $t_b a_s  - t_a b_s = t_a+ t_b$, which can be rewritten in $\frac{t_b}{ t_a}(a_s -1) =  b_j +1$. Since $ h(c_{i}) =  \lMin(c) $, we have $ \frac{r_b}{ r_a} \geq \frac{t_b}{ t_a}$. If $ \frac{r_b}{ r_a} > \frac{t_b}{ t_a}$, then we get $\frac{r_b}{ r_a}(a_s -1) >  b_j +1$, \textit{i. e.}  $r_b a_j  - r_a b_j > r_a +r_b$: the strong thickness constraint is satisfied.  
 
 If $ \frac{r_b}{ r_a} = \frac{t_b}{ t_a}$, then we get $\frac{r_b}{ r_a}(a_s -1) =  b_j +1$, \textit{i. e.}  $r_b a_s  - r_a b_s = r_a +r_b$.  On the other hand
 $t_b a_s  - t_a b_s = t_a+ t_b$ give  $t_b (a_s -1) = t_a (b_s +1)b$, thus, since $gcd(t_a, t_b) = 1$, there exists a positive integer $k$ such that 
 $a_s -1 = k t_a$  and $b_s +1 = k t_b$. This gives $a_s -1 +  b_s +1= k t_a +  k t_b$. But we know that   $a_s +  b_s =  t_a +   t_b$, thus $k =1$, $a_s -1 = t_a$  and $b_s +1 = t_b$.  Thus $gcd(a_s -1, b_s +1) = gcd(t_a, t_b) = 1$: the strong thickness constraint is satisfied. 
 \end{proof}

%
%
%

 We can now prove Proposition \ref{apply}. 
 
\begin{proof}

One easily sees that ${E}(c) <  2n$ since $\vert Top^+(c)  \vert  \leq n$ and $\vert Down^+(c)  \vert+ \vert Up^+(c)   \vert  \leq n-1$ and at least one equality must be strict. This gives the first item. \\

The second item is trivial, since, by definition,   $ Top^+$ is not empty. The site  $c_i$ of  $Top^+$  of lowest index is active, from lemma \ref{active}. (notice that we need the hypothesis : $\lMax(\conf^0) \geq t$  to ensure it,  in the case when $i < t$), and when $i$  is randomly chosen, with probability $\frac{1}{\tot-1}$,  the energy decreases from at least 1 unit  (actually 2 units,  except  when $i < \per$ and $i+\per \in Top^+(c))$, or  $i > \tot - \per $ and $i-\per  \in Top^+(c))$. \\

For the third item, we need more notations. 
Let $c$ be a configuration with positive energy.  For each $i \in \{1, 2, ..., \tot -1\}$,
denote by $\delta_i(c)$ be the configuration deduced  from $c$, when $i$ is chosen by the random process. 


 
Now,  make a partition $P^+(c)$ of  $Border^+(c)$ in subsets of at most three consecutive elements in such a way that,
\begin{itemize}
\item  for each $i \in  Down^+(c)$, integers $i$ and  $i+\per$  are in the same subset,
\item for each $i \in Up^+(c)$ such that  $i -2t \notin Top^+(c)$, then integers $i$ and  $i-\per$  are in the same subset. 
\end{itemize}

We claim that, for each subset $S$ the contribution of elements of $S$ to the value of  $\expect[\DD{E}(\conf)|\conf]$ is not positive. Indeed, let $S$ be an element of the partition $P^+(c)$.

\begin{itemize}
\item if $S = \{i\}$ and $h(c_i) < \lMax(c)$, then, by definition of  $P^+(c)$, we have  $h(c_{i-t} ) < \lMax(c)$ and $h(c_{i+t}) < \lMax(c). $ Thus,  from  lemma \ref{isole}, $h(\delta_i(c)_i) <   \lMax(c)$ thus $E(\delta_i(c)) = E(c)$. 

 If $S = \{i\}$ and $h(c_i) =  \lMax(c)$, we obviously have $E(\delta_i(c)) \leq  E(c)$, whether $i$ is active or not. Thus 
$$E(\delta_i(c)) \leq E(c). $$

\item if $S = \{i, i+\per\}$,   then assume without loss of generality that $h(c_i) = \lMax(c)$ and $h(c_{i+t}) < \lMax(c)$ (the other case is symmetric). 
Then $c_i$ is active,  $h(\delta_i(c)_i) <   \lMax(c)$, thus $E(\delta_i(c)) \leq  E(c) -2$. 
On the hand,   either $c_{i+t}$ is inactive and $E(\delta_{i+\per}(c)) =  E(c), $ or $c_{i+\per}$ is active and $E(\delta_{i+\per}(c)) \leq  E(c) +2$. Thus 
$$E(\delta_{i}(c)) +  E(\delta_{i+\per}(c)) \leq 2 E(c)$$

\item if $S = \{ i-t, i, i+t\}$,   then we have   $h(c_i) = \lMax(c)$,  $h(c_{i- t}) < \lMax(c)$ and  $h(c_{i+t}) < \lMax(c)$. 
Then $c_i$ is active,  $\delta_i(c)_i <   \lMax(c)$, thus $E(\delta_i(c)) =  E(c) -4$.

On the other hand,  as in the previous case, we have  $E(\delta_{i+t}(c)) \leq  E(c) +2$. By symmetry, we also have $E(\delta_{i- t}(c)) \leq  E(c) +2$. Thus we get 
$$E(\delta_{i- t}(c)) +  E(\delta_{i}(c)) + E(\delta_{i+t}(c)) \leq 3 E(c)$$
\end{itemize}

We have, from   lemma \ref{lem_haut}
 $$\expect[E(\delta(c)] = \frac{1}{\tot -1} \sum_{i \in \{1, 2, ..., \tot -1\}}E(\delta_{i}(c)  =   \frac{1}{\tot -1} \sum_{i \in Border^+} E(\delta_{i}(c))  $$
 
 Using our partition, we get
$$\expect[E(\delta(c)]   = \frac{1}{\tot -1} \sum_{S \in P^+} \sum_{j \in S}E(\delta_{i}(c)) 
  \leq   \frac{1}{\tot -1} \sum_{S \in P^+} \vert S \vert E(c)  = 
  E(c)  (\frac{ \vert Border^+ \vert}{\tot -1})  \leq E(c),$$
  
 which is the result.  
%
%
\end{proof}
%

\subsection{Results}
 \label{results}

\subsubsection{ Nonnegative configurations} 

We say that a configuration is nonnegative if for each $i \in \{0, 1, \tot \}$ we have $h(c_i) \geq 0$.

\begin{theorem}
If $(t_a, t_b)$ is visible by $\sight$, and the configuration $c^0$ is nonnegative, then the random process is a coalescence process in time  $0(n^4)$ time units in average. The  configuration reached is the unique Christoffel  configuration $c_{[0, \per-1]}$ such that $\lMax(c_{[0, \per-1]}) = \per -1$ and $\lMin(c_{[0, \per-1]}) = 0$. 
\end{theorem}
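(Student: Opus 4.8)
The plan is to reduce the theorem to the supermartingale estimate already established in Proposition~\ref{apply}, organising the dynamics into \emph{phases} indexed by the strictly decreasing value of $\lMax$, and then to pin down the limit configuration by the strip constraint.

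First I would extract the invariants forced by nonnegativity. Since $h(c^0_0)=h((0,0))=0$ and $h(c^0_\tot)=h((A,B))=-t_bA+t_aB=0$, nonnegativity gives $\lMin(c^0)=0$. By Corollary~\ref{decrease}, along the process $\lMin$ is nondecreasing and $\lMax$ is nonincreasing; combined with $\lMin(c)\le h(c_0)=0$ this forces $\lMin(c^t)=0$ for every $t$, so all configurations remain nonnegative and only $\lMax$ moves. The universal inequality $\DDl(c)\ge \per-1$ then gives $\lMax(c^t)\ge \per-1$ throughout, while $\lMax(c^0)\le t_aB=n\,t_at_b\le n(\sight/2)^2=O(n)$.

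Next comes the phase argument. Fix a level $L$ with $\per\le L\le\lMax(c^0)$ and read $E$ as the energy relative to the reference value $L$; by definition $E(c)=0$ exactly when $\lMax(c)\neq L$, i.e.\ (by monotonicity) when $\lMax$ has strictly dropped below $L$. Proposition~\ref{apply} says $E$ is a supermartingale with $k=3n$ and $\epsilon=1/(\tot-1)$, so Lemma~\ref{lem:arg:mart} bounds the expected time for a phase started at any $\hat c$ with $\lMax(\hat c)=L$ to reach $\lMax<L$ by $kE(\hat c)/\epsilon\le 3n\cdot 3n\cdot(\tot-1)=O(n^3)$, uniformly in $\hat c$ (here $\tot-1=n\per-1=O(n)$ since $\per\le\sight$ is constant). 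Because $\lMax$ is integer valued, strictly decreasing across phases, bounded below by $\per-1$ and above by $O(n)$, there are at most $O(n)$ phases. Invoking the strong Markov property together with the fact that the number of phases is deterministically $O(n)$, I would sum the uniform per-phase bounds to obtain total expected time $O(n)\cdot O(n^3)=O(n^4)$ to reach $\lMax=\per-1$.

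Finally I would identify the terminal configuration. When $\lMax=\per-1$ and $\lMin=0$ the thickness equals $\per-1$, so the configuration is Christoffel and therefore stable, and the process halts. Uniqueness is immediate: writing $h_i=h(c_i)$, the strip $0\le h_i\le\per-1$ forces a $b$-step (height increment $+t_a$) whenever $h_{i-1}\le t_b-1$ and an $a$-step (height increment $-t_b$) whenever $h_{i-1}\ge t_b$, since the alternative leaves the strip; hence the height sequence, and thus the word, is completely determined, yielding the unique $c_{[0,\per-1]}$. The genuine content is packaged in Proposition~\ref{apply}; the delicate point in assembling the theorem is the passage from the single-phase estimate to the global $O(n^4)$ bound — one must check that the per-phase bound is uniform over all possible phase-start configurations so that the $O(n)$-phase factor can be pulled out, and that resetting the reference level $L$ at the start of each phase is legitimate because $\lMin\equiv 0$ keeps every phase in the same nonnegative regime where Proposition~\ref{apply} applies.
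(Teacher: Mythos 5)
Your proposal is correct and follows essentially the same route as the paper: decompose the run into phases indexed by the (nonincreasing, integer-valued) value of $\lMax$, bound each phase by $O(n^3)$ via Proposition~\ref{apply} and Lemma~\ref{lem:arg:mart}, multiply by the $O(n)$ possible levels, and identify the terminal configuration from $\lMin=0$ and $\lMax\le\per-1$. You merely make explicit two points the paper leaves implicit — the uniformity of the per-phase bound over phase-start configurations and the strip argument showing $c_{[0,\per-1]}$ is the unique configuration of thickness $\per-1$ with $\lMin=0$ — both of which are sound.
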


\begin{proof}
We can decompose  $T = \sum_{j = \per}^{2n-1} T_j$ where $T_j $ is the time to get a configuration $c$ such that 
 $\lMax(c) \geq j-1$ from a configuration $c'$ such that $\lMax(c) \geq j$. From Proposition \ref{apply}, we have 
 
 $\expect[T_j] \leq (2n-1)^2 (\tot -1) $, thus  $\expect[T] \leq (2n-1)^3 (\tot -1) $. Moreover,  from corollary 1 we have, for any $t$,  $\lMin(c^t) \geq 0$. Thus,  after time $T$, we get a configuration $c$ such that, $\lMax(c) \leq  \per -1$ and $\lMin(c)  \geq 0$ which ensures that $c =  c_{[0, \per-1]}$. We know that $c_{[0, \per-1]}$ is stable, from corollary 2. 
 \end{proof}


\subsubsection{ The general bounded case }

If we work with two energies, one as described above, related  to $\lMax$, and one symmetric, related to $\lMin$, one  gets, in a similar way: 

\begin{theorem}
\label{general}
If  $(t_a, t_b)$ is visible by $\sight$, then, with any initial configuration,  the random process almost surely reaches a configuration $c$  such that:  $ - \per +1 \leq \lMin(c)  < \lMax(c) \leq  \per - 1$.   The time $T$ necessary to reach such a configuration in $0(n^4)$ in average. 
\end{theorem}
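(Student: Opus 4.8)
The plan is to mirror the argument of the nonnegative case by running two coupled energy functions: the energy $E^+$ of Proposition \ref{apply}, which drives $\lMax$ down, together with its reflected counterpart $E^-$, which drives $\lMin$ up. The crucial structural fact is Corollary \ref{decrease}: along the process $\lMax$ is non-increasing and $\lMin$ is non-decreasing. Hence each of the two target bounds $\lMax(c)\le \per-1$ and $\lMin(c)\ge -\per+1$, once achieved, is preserved forever, so I may establish them one after the other and simply add the two running times.

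First I would reduce $\lMax$. As long as $\lMax(c)\ge \per$, the energy $E^+$ built relative to the current top level is positive and, by Proposition \ref{apply}, satisfies the hypotheses of Lemma \ref{lem:arg:mart} with $k=3n$ and $\epsilon=\frac{1}{\tot-1}$; I would stress that nonnegativity of $c^0$ is used neither in that proposition nor in Lemma \ref{active}, so the bound applies to an arbitrary initial configuration. Thus the expected time for $E^+$ to vanish, that is for $\lMax$ to drop to the next extremal level, is at most $\frac{k^2}{\epsilon}=9n^2(\tot-1)$. Since every lattice height obeys $h(c_i)=-t_b x_i+t_a y_i\le t_a B=n\,t_a t_b$, we have $\lMax(c^0)\le n\,t_a t_b$, so only $O(n)$ levels separate the start from $\lMax\le \per-1$; summing over levels (with the energy redefined relative to each successive level, as in the nonnegative case) gives expected time $O(n^4)$ for this first phase, using $\tot=n\per=O(n)$ for fixed $t_a,t_b$.

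Next, starting from the configuration reached, in which $\lMax\le\per-1$ is now frozen by Corollary \ref{decrease}, I would raise $\lMin$ by the symmetric argument. Because $\delta$ is totally symmetric (invariant under the central symmetry and under the exchange $a\leftrightarrow b$), every ingredient used above has an exact mirror: Lemma \ref{active} is already phrased for the $\lMin$ case, and the reflected energy $E^-$, counting the sites at the minimal level together with the analogues of $Top^+,Down^+,Up^+$, satisfies Lemma \ref{lem:arg:mart} with the same constants. As $\lMin(c^0)\ge -n\,t_a t_b$, again only $O(n)$ levels separate it from $-\per+1$, so this phase also costs expected time $O(n^4)$. Adding the two phases yields total expected time $O(n^4)$; finiteness of the expectation gives almost-sure termination, and the terminal configuration satisfies $-\per+1\le\lMin(c)<\lMax(c)\le\per-1$, the strict inequality coming from $\DDl(c)=\lMax(c)-\lMin(c)\ge\per-1\ge 1$.

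The main obstacle I anticipate is not any single deep step but the bookkeeping that makes the ``two independent energies'' argument rigorous. One must verify that the monotonicity of Corollary \ref{decrease} genuinely decouples the control of $\lMax$ and $\lMin$, so that raising $\lMin$ in the second phase cannot resurrect a larger $\lMax$, and that the reflected energy $E^-$ is legitimately governed by the mirror image of Proposition \ref{apply} rather than demanding a fresh analysis. Some care is also needed to apply the per-level estimate of Lemma \ref{lem:arg:mart} correctly, redefining the energy relative to each successive extremal level rather than to the single original one, and to note that the theorem only claims that such a configuration is \emph{reached} (the process may keep moving within this thin band of near-ideal configurations), which is exactly what the two-phase hitting-time argument delivers.
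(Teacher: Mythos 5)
Your proposal is correct and follows essentially the same route as the paper, which proves this theorem only by the one-line remark that one should ``work with two energies, one related to $\lMax$ and one symmetric, related to $\lMin$.'' Your sequential two-phase version, using the monotonicity of Corollary \ref{decrease} to decouple the descent of $\lMax$ from the ascent of $\lMin$ and observing that neither Proposition \ref{apply} nor Lemma \ref{active} uses nonnegativity of $c^0$, is exactly the intended argument, spelled out in more detail than the paper itself provides.
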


Notice that our arguments fail to continue to decrease the thickness, because of difficulties at the boundary. For $\lMax(c) \leq  \per - 1$, when the lowest element $i$ of $ Top^+(c)$ is such that $i \leq \per$, we can cannot ensure that $i$ is active.

Thus Theorem \ref{general} is partially  satisfying: we reach a set of configurations which are only partially stable. Sites $c_i$ with $i \equiv 0 \mod (\per)$ are no more active and are  the ideal line of equation  $-t_b x+ t_a y = 0$.  But other some other site  can be. Nevertheless, one can remark that these active sites  only have the freedom to oscillate around the deal line, between the two postions which are the closest ones to the ideal line,\textit{ i. e. } the positions  of  lowest positive height, and of largest negative height.

It is not possible to always get the optimal thickness  with our algorithm.  For example, for $\sight = 5$ and $(t_a, t_b) = (3, 2)$, consider  the configuration $c$ associated  word $ (ba^2ba)^{n-1} ba^3b$. 
We have  $h(c_{\tot -1}) = \lMin(c)   = -3$ and $h(c_1) =  \lMax(c) = 3$. Moreover, with  Lemma \ref{decrease}, one  can easily see that,  for any integer $t >0$, $h( \delta^t(c)_1)  = 3$ and $h( \delta^t(c)_{\tot -1}) =  - 3$ (see Figure \ref{fig:example}). 

\begin{figure}[htbp] 
   \centering
   \includegraphics[width=10cm]{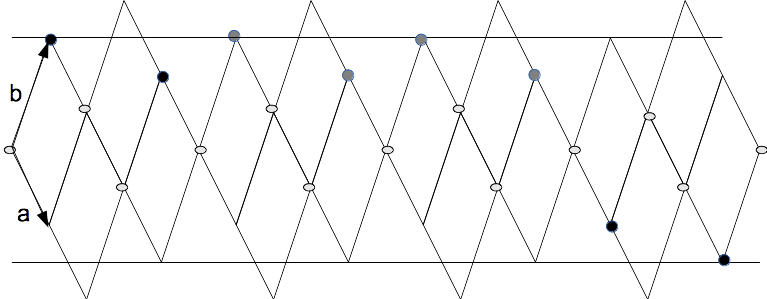} 
   \caption{An example of configuration whose thickness cannot be decreased. For any positive $t$, white sites cannot be active in $ \delta^t(c)$ because of the non increase of thickness. Green sites may be active in $ \delta^t(c)$,  but  no  choice of possible position for these sites allow black sites to be active.}
   \label{fig:example}
\end{figure}

\subsubsection{ The general periodic  case}

Notice that  configurations can be can be seen as cycles by identifying site $c_0$ and site $c_{\tot}$.  We call it the \emph{cyclic model}. 
Formally,  instead of considering sites as elements of $\Z^2$, they are considered as elements of the quotient space $\Z^2/ n(t_a, t_b) \Z$. This makes two main differences:  the site $c_0$ can be possibly active, and for each configuration $c$ and each index $i$, we have $\vert w^l(c, i) \vert = \vert w^r(c, i)  \vert = \sight$, so the sight constraint becomes irrelevant.

Using a very light modification of the energy function (adding two units for the energy when  $Top^+(c) = Border^+$), we obtain the following result.

\begin{theorem}
In the cyclic model, if  $(t_a, t_b)$ is visible by $\sight$, from any origin configuration $c^0$,  the random process is a coalescence process whose coalescence time $T$ in $0(n^4)$ in average.
\end{theorem}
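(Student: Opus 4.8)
The plan is to replay the martingale machinery of Proposition~\ref{apply} for an energy attached to $\lMax$, exploiting the one structural simplification of the cyclic model: since sites live in $\Z^2/n(t_a,t_b)\Z$, every site has $|w^l(c,i)|=|w^r(c,i)|=\sight$, so the sight constraint is vacuous and \emph{all} the boundary pathologies that blocked Theorem~\ref{general} disappear. I would keep the energy $E^+(c)=2\,|Top^+(c)|+|Down^+(c)|+|Up^+(c)|$ with every index now read modulo $\tot$ (so that $|Border^+|=n$ exactly), and add the announced correction: increase $E^+(c)$ by $2$ whenever $Top^+(c)=Border^+$. The goal is then to verify the three hypotheses of Lemma~\ref{lem:arg:mart} for $E^+$ with $k=O(n)$ and $\epsilon=\tfrac1\tot$, deduce that $\lMax$ strictly decreases after an expected $O(n^2\tot)$ steps as long as $\DDl(c)\ge\per$, and telescope over the $O(n)$ admissible values of $\lMax$ down to the Christoffel level $\DDl=\per-1$, exactly as in the nonnegative case; stability of the limit is then Corollary~2, so the process is a coalescence process.

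The technical engine I would isolate first is a residue fact coming from Property~1: since $h(c_{i+1})\equiv h(c_i)+t_a \pmod{\per}$ (both $-t_b$ and $t_a$ are $\equiv t_a$ mod $\per$) and $\gcd(t_a,\per)=\gcd(t_a,t_b)=1$, the residue $h(c_i)\bmod\per$ is a bijective function of $i\bmod\per$. Hence the only indices whose height can equal $\lMax$ form a single class, that of $Border^+$, and any site of that class with height $<\lMax$ is in fact $\le\lMax-\per$. This has two payoffs. First, since a flip changes a single height by $\pm\per$ and Corollary~\ref{decrease} forbids raising $\lMax$, the energy can only grow by an increasing flip on a class-$Border^+$ site sitting exactly at $\lMax-\per$; this pins down the only positive transitions. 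Second, it furnishes active top sites without any boundary caveat: if the top layer is not full, pick $i\in Down^+(c)$, so $c_{i+\per}$ has height $\le\lMax-\per$ and lies within $\per$ to the right of the top site $c_i$; by the total symmetry of $\dyn$ this makes $c_i$ active. With active top sites guaranteed, the boundedness $E^+\le 4n=O(n)$ and the non-positive-drift partition of $Border^+$ into cyclic blocks of at most three consecutive top indices both carry over essentially verbatim from Proposition~\ref{apply}.

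The main obstacle, and the sole reason for the $+2$ correction, is the fully periodic configuration $Top^+(c)=Border^+$, where the top layer wraps entirely around the cycle and $Down^+=Up^+=\varnothing$. Without the correction a single top flip would send $E^+$ from $2(n{-}1)+1+1=2n$ back to $2n$, recording no progress and violating the third hypothesis of Lemma~\ref{lem:arg:mart}; with it, such a configuration has energy $2n+2$ and the first top flip lowers $E^+$ by exactly $2$. Two points then need care. Existence of an active site in this case: take $c_m$ at height $\lMin(c)\le\lMax(c)-\per$; since the layer is full, the nearest $Border^+$-index to its right is a top site of index $m+j$ with $0<j\le\per$, and as $h(c_{(m+j)-j})+\per=\lMin(c)+\per\le\lMax(c)$, Lemma~\ref{active} makes it active. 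And preservation of the drift bound: by the residue fact the only transition the correction can turn positive is an increasing flip completing an almost-full layer ($|Top^+|=n-1$), giving $\DD{E^+}=+2$; I would pair that index in the partition with the adjacent top site, which is active by total symmetry (a site of height $\le\lMax-\per$ now lies within $\per$ on its right) and whose decreasing flip contributes $-2$, so the block sum stays $\le 2E^+(c)$ as in Proposition~\ref{apply}. Granting these, Lemma~\ref{lem:arg:mart} gives $O(n^2\tot)$ expected time per level, and the same telescoping over the $O(n)$ height levels as in the nonnegative case yields the announced $O(n^4)$; since $\DDl=\per-1$ exactly characterises Christoffel configurations, which are stable by Corollary~2, the process coalesces.
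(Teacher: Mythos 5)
Your proposal is correct and follows exactly the route the paper intends: the paper's own ``proof'' is the one-line remark that one reuses the energy of Proposition~\ref{apply}, ``adding two units for the energy when $Top^+(c) = Border^+$,'' and your argument is precisely a worked-out version of that sketch (vacuous sight constraint, same partition of $Border^+$, the $+2$ correction to handle the fully wrapped top layer, telescoping over $O(n)$ height levels). The only difference is that you supply the details the paper omits -- in particular the residue argument pinning the $+2$ transitions and the pairing that restores the non-positive drift -- which is a useful completion rather than a different approach.
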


 \subsubsection{ Impossibility result } 
 
\begin{theorem}
\label{main:impossible}
Consider any local rule $\delta$ of sight $\sight$, and take $(t_a, t_b) = (s+1, 1)$.  One of the following alternatives holds: 
\begin{itemize}
\item a Christoffel configuration of slope $\frac{1}{s+1}$ is not stable, 
\item  for any $k >0$, there exists a configuration $c$ such that $\DDl(c) \geq k $ and $c$ is stable. 

\end{itemize}

\end{theorem}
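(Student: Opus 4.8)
The plan is to prove the contrapositive of the first alternative: assuming the Christoffel configuration of slope $\frac{1}{s+1}$ is stable, I produce, for every $k$, a stable configuration of thickness at least $k$, which is exactly the second alternative. Everything hinges on the locality of $\delta$: an agent of sight $s$ decides its status from the pair $(w^l(c,i),w^r(c,i))\in(\Sigma^{\leq s})^2$ alone, so two corners carrying the same local view have the same status in every configuration. A site $c_i$ is a \emph{corner} when $w^l(c,i)_1\neq w^r(c,i)_1$; a flip at a non-corner swaps two equal letters and leaves $c$ unchanged, so ``stable'' means that no corner is active. I would first read off the local views at the corners of the reference Christoffel configuration $C_n$, whose word is $(a^{s+1}b)^n$ (since $t_a+t_b=s+2$, each period is one isolated $b$ together with $s+1$ letters $a$, and $\DDl(C_n)=s+1$): its interior corners give the views $(a^{s},ba^{s-1})$ at an $a\!\to\!b$ corner and $(ba^{s-1},a^{s})$ at a $b\!\to\!a$ corner, while the truncated corner at the right endpoint gives $(a^{s},b)$. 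Stability of $C_n$ therefore forces
\[ \delta(a^{s},ba^{s-1})=\delta(ba^{s-1},a^{s})=\delta(a^{s},b)=0, \]
together with $\delta=0$ on the finitely many truncated views read within distance $s$ of either endpoint of $C_n$.

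Next I would build the thick configuration. Fix $k$, set $H=k$, choose $n\geq H+1+2s$, and obtain $c_k$ from $C_n$ by replacing $H+1$ consecutive interior periods $(a^{s+1}b)^{H+1}$ by the block $(a^{s}b)^{H}\,(a^{s+1+H}b)$, placed so that at least $s$ untouched periods remain on each side. Both blocks contain $(H+1)(s+1)$ letters $a$ and $H+1$ letters $b$, so the replacement preserves the letter counts and $c_k$ is again a configuration of slope $\frac{1}{s+1}$ on the same grid. A short height computation shows that along the detour the period-boundary heights climb from $0$ to $H$ (each short period $a^{s}b$ raises this height by one), after which the long run $a^{s+1+H}$ brings it back to $0$ while dipping to $-(s+1)$; hence $\lMax(c_k)=H$, $\lMin(c_k)=-(s+1)$ and $\DDl(c_k)=H+s+1\geq k$.

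It remains to verify that $c_k$ is stable. By construction every maximal run of $a$'s has length at least $s$ and every $b$ is isolated, so each interior corner (both of whose $s$-windows avoid the endpoints) reads exactly $(a^{s},ba^{s-1})$ or $(ba^{s-1},a^{s})$, hence is inactive by the displayed equalities. Each corner within distance $s$ of an endpoint has both windows inside the long prefix and suffix that $c_k$ shares letter-for-letter with $C_n$, so it carries the same local view as the corresponding corner of $C_n$ and is inactive because $C_n$ is stable. Non-corner sites never modify $c_k$, so $c_k$ is stable and of thickness at least $k$, giving the second alternative.

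The delicate point — and the reason the statement is a dichotomy rather than a flat impossibility — is exactly the matching near the two endpoints: the truncated views there are not among the two generic interior pairs, so I cannot control them from the interior equalities alone. The construction sidesteps this by keeping $c_k$ literally equal to the stable reference $C_n$ over stretches of length at least $2s$ at both ends, so that the boundary views are inherited rather than reargued. Making this shared prefix/suffix bookkeeping precise — checking that a radius-$s$ window around any near-boundary corner never reaches the inserted detour — is the main thing to get right, whereas the interior analysis is immediate from the run-length bookkeeping.
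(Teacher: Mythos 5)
Your proof is correct and takes essentially the same approach as the paper: stability of the Christoffel configuration forces $\delta(a^{s},ba^{s-1})=\delta(ba^{s-1},a^{s})=0$, and one then exhibits an arbitrarily thick configuration whose corners only ever read these views (plus endpoint views inherited verbatim from the Christoffel word). The paper's witness is $a^{s+1}b\,(a^{s+2}b)^{k-1}(a^{s}b)^{k-1}\,a^{s+1}b$, a mirror image of your $(a^{s}b)^{H}(a^{s+1+H}b)$ insertion; your write-up is if anything more explicit about the truncated windows near the endpoints and about restricting attention to corner sites.
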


\begin{proof}
Consider 
the words $w=a^{s+1}b$, $w'=a^{s}b$, $w''=a^{s+2}b$, the configuration $c$ corresponding to $w^{2k}$ 
and $c'$ corresponding to $w(w'')^{k-1}(w')^{k-1}w$. 

The  configuration $c$ corresponds to a Christoffel word. Assume that  $c$ is stable under the local rule $\delta$. Thus,  $\delta(a^s,ba^{s-1}) = \delta(ba^{s-1},a^s) = 0$. This implies that  $c'$  is also stable under the local rule $\delta$.
On the other hand, we have  $\lMin(c')   = -s -k+1 $ and $\lMax(c) = 0$, thus  $\DDl(c) =  s+ k-1$. 
\end{proof}

\section{Conclusion and open questions}
\label{conc}

In this part, we start by presenting the improvements which can be done to this paper. Then we focus on the possible extensions and applications of our work.

We have introduced and analyzed a random process which enables a twisted thread to reorganize itself. We think that the core rule is optimal in term of sight and convergence speed in our setting. Nevertheless, we think that our analysis is not optimal. For the case of slope $1$, our random process is the same one as the one introduced and analyzed in $\cite{FBR2010an}$ but the analysis of this paper gives an upper bound on the convergence time of $O(\Dhz \tot^3)$ 
whereas in \cite{FBR2010an} they prove an upper bound of $O(\tot^3)$. We were able to generalize the random process for any rational slope but not the analysis. In fact, we conjecture that our random process converges in $O(\tot^3)$ since our analysis considers only the sites of maximal height and forgets about a lot of useful updates which are done in parallel. Also, we think that our rule for synchronizing the endpoints is not optimal in time and sight. We conjecture that both endpoints can be synchronized in polynomial time according to $n,s$ and $\per$ with agents of sight $2s$ at the endpoints. 

Another interesting question is to generalize our process to dimensions greater than two, \emph{i.e.} to an alphabet with more than two letters for the language theory version of this problem. In ongoing works, our process is working well experimentally in greater dimensions if given a big enough sight but we are not able yet to prove that there is no interlocking between the letters. This extension is interesting for two applications of our work.

The first application is for studying a model of cooling processes in crystallography \cite{FBR2010ap, FBR2010an, FBR2010ao}. In this paper, we study in fact a ``simple" case where two kinds of atoms are disposed on a line and these atoms want to diminish the interactions with the atoms of the same kind. Increasing the dimension of this problem corresponds to consider more than two kinds of atoms. Also, note that we previously analyzed the $2D$ case, with a periodic tiling in \cite{FBR2010ao} but our study supposes that the three kinds of atoms required in this tilling have the same proportions. It would be interesting to study this case, using our new method when atoms do not have the good proportion. Also, for concluding our set of studies, we would like to present and analyze a cooling model for a $2D$ aperiodic tiling like Penrose tiling. Aperiodic tilings correspond to quasicrystal and actually fabricating a quasicrystal of large size is not possible because the cooling process is not well understood.

The second application would be to generalize the density classification problem \cite{Pac88}. In this problem, we consider a one dimensional chain of agents. There are two states and each agents can memorize only one state. Using a distributed algorithm, agents must determine the majority state in the initial configuration while storing only one state by agent. It is known that this problem is not solvable under parallel dynamics \cite{LB95} but recently Fat\`es \cite{Fates11} solved this problem with any arbitrary precision using a probabilistic dynamics. We think that our result can be used to generalize the density classification problem with more states (which is again equivalent to increasing the number of dimensions) and to consider questions like "Is the initial density of state $a$  more than $\frac{B}{A}$?".

%
%
%
%
%

\bibliographystyle{plain}
\bibliography{MFCS2014}

%

\end{document}